
\documentclass[sigconf, nonacm]{acmart}
\usepackage{mathtools}
\usepackage{amssymb}
\usepackage{listings}
\usepackage{array}
\usepackage{xspace} 
\usepackage[pdf]{graphviz}
\usepackage[normalem]{ulem} 
\usepackage{comment}
\usepackage{graphicx}
\graphicspath{ {./figures/} }
\usepackage{varwidth}
\usepackage[ruled, linesnumbered]{algorithm2e}
\usepackage{enumitem}
\usepackage{footnote}
\usepackage{multirow}
\usepackage{tablefootnote}
\usepackage{xpatch}
\usepackage{algorithm2e}
\newcommand{\forfullversion}[1]{}
\makeatletter
\newcommand*{\addFileDependency}[1]{
  \typeout{(#1)}
  \@addtofilelist{#1}
  \IfFileExists{#1}{}{\typeout{No file #1.}}
}
\makeatother
\xpretocmd{\digraph}{\addFileDependency{#2.dot}}{}{}

\setlist{  
  listparindent=\parindent,
  parsep=0pt,
}

\newcommand{\sys}{\textsf{\textsc{Janus}AQP}\xspace}
\newcommand{\syn}{\textsf{DPT}\xspace}
\newcommand{\prior}{\textsf{SPT}\xspace}

\newcommand{\code}[1]{\texttt{#1}}
\long\def\comment#1{}

\newcommand{\polylog}{\textrm{polylog}}
\newcommand{\remove}[1]{}
\def\mparagraph#1{\par\medskip\noindent{\textbf{#1.}}\quad\parindent 1.5em}

\newcommand{\MAXV}{\mathcal{M}}
\newcommand{\fullData}{\mathcal{D}}
\newcommand{\samples}{S}
\newcommand{\PASStree}{\mathcal{T}}
\newcommand{\maxVar}{\mathcal{M}}
\newcommand{\MAXVar}{\mathcal{M}}
\newcommand{\optVar}{\mathcal{V}}





\newcommand\vldbpagestyle{plain}

\newcommand{\newchanges}[1]{#1}
\newcommand{\reviewone}[1]{#1}
\newcommand{\reviewtwo}[1]{#1}

\begin{document}

\title{\sys: Efficient Partition Tree Maintenance for Dynamic Approximate Query Processing}

\lstset{%
    basicstyle=\ttfamily,
    backgroundcolor=\color{white},
    breakatwhitespace=false,
    breaklines=true,
    captionpos=b,
    commentstyle=\color{dkgreen},
    escapeinside={\%*}{*)},
    extendedchars=true,
    frame=none,
    keepspaces=true,
    keywordstyle=\color{blue},
    language=SQL,
    numbers=none,
    showspaces=false,
    showstringspaces=false,
    showtabs=false,
    aboveskip=5pt,
    belowskip=5pt,
    sensitive=false
}

\author{Xi Liang}
\affiliation{%
  \institution{University of Chicago}
}
\email{xiliang@uchicago.edu}

\author{Stavros Sintos}
\affiliation{%
  \institution{University of Chicago}
}
\email{sintos@uchicago.edu}

\author{Sanjay Krishnan}
\affiliation{%
  \institution{University of Chicago}
}
\email{skr@cs.uchicago.edu}


\begin{abstract}
Approximate query processing over dynamic databases, i.e., under insertions/deletions, has applications ranging from high-frequency trading to internet-of-things analytics.
We present \sys, a new dynamic AQP system, which supports SUM, COUNT, AVG, MIN, and MAX queries under insertions and deletions to the dataset. 
\sys extends static partition tree synopses, which are hierarchical aggregations of datasets, into the dynamic setting.
This paper contributes new methods for: (1) efficient initialization of the data synopsis in the presence of incoming data, (2) maintenance of the data synopsis under insertions/deletions, and (3) re-optimization of the partitioning to reduce the approximation error. 
\sys reduces the error of a state-of-the-art baseline by more than 60\% using only 10\% storage cost. \sys can process more than 100K updates per second in a single node setting and keep the query latency at a millisecond level.


\end{abstract}
\maketitle

\pagestyle{\vldbpagestyle}


\setcounter{section}{0}
\section{Introduction}
Approximate query processing (AQP) studies principled ways to sacrifice query result accuracy for faster or more resource-efficient execution~\cite{chaudhuri2017approximate, garofalakis2001approximate}.
AQP systems generally employ reduced-size summaries, or ``synopses'', of large datasets that are faster to process.
The simplest of such synopsis structures are histograms and samples~\cite{cormode2011synopses, liang2021combining, agarwal2013blinkdb, lazaridis2001progressive}, but many others have been proposed in the literature.
More complex synopses are more accurate for specific types of queries~\cite{walenz2019learning}, specific data settings~\cite{poepselland}, or even are learned with machine learning models~\cite{yang2019deep, hilprecht2019deepdb, ma2021learned}. 
AQP is particularly interesting and challenging in a  \textbf{dynamic data setting}, where a dataset is continuously modified with insertions and deletions~\cite{garofalakis2001approximate, olma2019taster, acharya1999aqua}.
In this setting, hereafter denoted as DAQP, any synopsis data structures have to be continuously maintained online.

As an example use-case, consider a database aggregating per-stock order data for the NASDAQ exchange~\cite{nasdaqbv}.
Suppose, that we would like to build a low-latency SQL interface for approximate aggregate queries over the past seven days of order data. 
On a typical day, there are 25M new orders that correspond to trades that are placed by brokers (up to 70,000 orders in any given second).
A decent fraction of these orders are eventually canceled or prematurely terminated, for a variety of financial reasons.
Thus, this database is highly dynamic with a large volume of new insertions (new orders) and a small but significant number of deletions (canceled orders).
This paper explores such scenarios with similar motivating applications in internet-of-things monitoring and enterprise stream processing.

Simple synopses like 1D histograms and uniform samples are easy to maintain dynamically. However, such structures are often inaccurate in high-dimensional data and selective query workloads. More complex synopses structures, e.g, ~\cite{yang2019deep,liang2021combining} can be optimized for a particular instance (dataset and query workload),  but are generally harder to maintain online.
For example, recently proposed learned synopses require expensive retraining procedures which limit insertion/deletion throughput~\cite{yang2019deep, hilprecht2019deepdb, ma2021learned}.
Even classical stratified samples may have to be periodically re-optimized and re-balanced based on query and workload shifts~\cite{agarwal2013blinkdb}.
These, expensive (re-)initialization procedures can significantly hurt insertion throughput, and accordingly, almost all existing AQP systems focus on the static data warehousing setting\footnote{A notable exception being the AQUA project~\cite{acharya1999aqua} from 20 years ago.}.
Unfortunately, the existing techniques that \emph{are} designed for dynamic data, such as sketches and mergeable summaries~\cite{gan2020coopstore, agarwal2012mergeable,poepselland, cormode2011synopses, gan2020coopstore, agarwal2012mergeable}, often cannot handle arbitrary deletions or aggregation queries with arbitrary predicates easily.

\reviewone{
In particular, traditional sketch techniques~\cite{cormode2011sketch, cormode2011synopses} focus on frequency, distinct values, and quantile problems. They do not handle arbitrary range queries with low error guarantees and they only consider updates in the streaming setting, i.e. they do not handle arbitrary insertions and deletions.
Our system has a few similarities with high dimensional histograms which is another type of sketching that has been used for aggregation queries. We construct a space partitioning and we store some statistics to answer queries, which is also the case for high dimensional histograms. However, we propose a dynamic method that combines partitioning (histograms) with stratified sampling and extend it to handle dynamic updates efficiently. Multi-dimensional histograms usually cannot be updated easily under arbitrary insertions and deletions, for example \cite{thaper2002dynamic, gilbert2002fast} do not handle general updates, and \cite{donjerkovicdynamic} does not guarantee optimality. Furthermore, most histograms focus on equi-depth, equi-width or V-optimal partitioning.
We maintain a partitioning that guarantees that the confidence intervals we get are close to the optimum (shortest) confidence intervals we could have from the best partitioning. 
}

\reviewone{
In addition, mergeable summaries~\cite{agarwal2012mergeable, gan2020coopstore} focus mostly on frequency and quantiles problems.
Traditional mergeable summaries for aggregation queries mainly include (uniform, stratified, reservoir) sampling. While sampling techniques support efficient updates in the database, the error is far from optimum.
Our method is essentially a technique that uses mergeable summaries for aggregation queries. However, in contrast to traditional mergeable summaries, our new system uses a more involved technique: Maintain a partitioning over the space and a stratified sampling under arbitrary insertions and deletions. When we get a query we combine low-error (updated) summaries with only a few (updated) higher-error summaries to return a better estimation than traditional mergeable summaries techniques.
}

Thus, it is understood that most synopsis data structures have at least one of the following pitfalls in our desired dynamic setting: throughput, drift, or generality~\cite{chaudhuri2017approximate}.

\begin{figure}[t]
    \centering
    \includegraphics[scale=0.25]{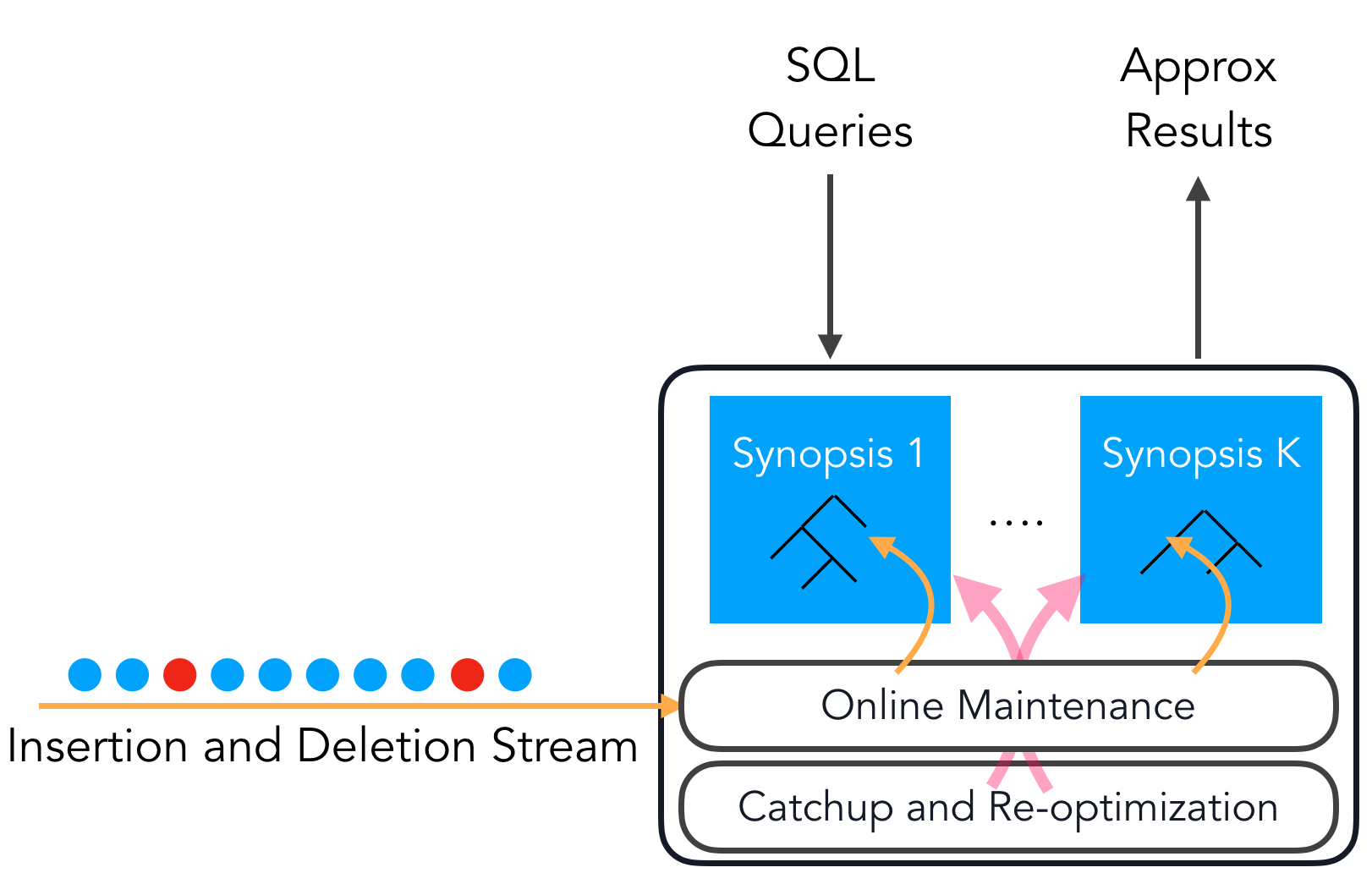}
    \caption{\sys manages a collection of \syn synopses by maintaining them online while periodically re-optimizing partitioning and sample allocation. \label{fig:overview}}\vspace{-0.5em}
\end{figure}

This paper explores the DAQP problem and studies ways that we can mitigate the pitfalls of prior approaches with a flexible synopsis data structure that can continuously re-optimize itself.
We present \sys, a new DAQP system, which supports SUM, COUNT, AVG, MIN, and MAX queries with predicates under arbitrary insertions and deletions to the dataset. 
The main data structure in \sys is a dynamic extension of our recently published work~\cite{liang2020fast,liang2021combining}, which we call a Dynamic Partition Tree (\syn).
\syn is a two-layer synopsis structure that consists of a: (1) hierarchical partitioning of a dataset into a tree, and (2) a uniform sample of data for each of the leaf partitions (effectively a stratified sample over the leaves).
An optimizer determines the best partitioning conditions and sample allocations to meet a user's performance goals.
For each partition (nodes in the tree), we calculate the SUM, COUNT, MIN, and MAX values of the partition.
Any desired SUM, COUNT, AVG, MIN, and MAX query can be efficiently decomposed into two parts with the structure: a combination of the partial aggregates where the predicate fully covers a partition in the tree, and an approximate part where the predicate partially covers a leaf node (and can be estimated with a sample).
More importantly, this structure is essentially a collection of materialized views and samples, which can be maintained incrementally.

A core contribution of \sys is online synopsis optimization.
\sys continuously monitors the accuracy of all of its \syn synopses to account for data and workload drift. 
When a synopsis is no longer accurate, it triggers a re-optimization procedure that resamples and repartitions the data. 
This re-optimization problem is both a significant algorithmic and systems challenge.
From an algorithmic perspective, \sys needs an efficient way to determine the optimal partitioning conditions in dynamic data.
We propose an efficient algorithm based on a dynamic range tree index that finds a partitioning that controls the minimax query error (up to an approximation factor).
From a systems perspective,  re-optimization poses a bit of a logistical challenge.
New data will arrive as the new synopsis data structure is being constructed. We design an efficient multi-threaded catch-up processing algorithm that synchronizes new data and historical data without sacrificing the statistical rigor of the estimates. 

\reviewone{
Overall, the new system \sys we proposed has the following benefits over previously known indexes for approximate query processing. It handles arbitrary dynamic updates efficiently (comparing to the static indexes PASS~\cite{liang2021combining}, VecrdictDB~\cite{park2018verdictdb}, or some dynamic histograms~\cite{gilbert2002fast} that do not handle arbitrary updates), it provides theoretical guarantees on the confidence intervals (comparing to the machine learning based indexes such that DeepDB~\cite{hilprecht2019deepdb}), the query procedure accesses only a small synopsis of data and does not touch the original data set so the communication throughput is low (comparing to other tree-based indexes such as~\cite{joshi2008materialized, jurgens1998r}), and the estimation error is always low (comparing to traditional sketches~\cite{cormode2011synopses}, or histograms for aggregation queries)  without making any assumption about the spatial/value-domain distribution of the data (comparing to~\cite{lazaridis2001progressive}). Furthermore, our new system maintains the strongest theoretical guarantees on the error comparing to all the other known methods we discussed. In particular, the confidence intervals we get are close to the optimum confidence intervals we could have from the best partitioning.}
\section{Background}
We first introduce the core concepts behind the synopses used in this work. 

\subsection{Dynamic Approximate Query Processing}
We assume an initial database table $\fullData^{(0)}$.
This table $\fullData^{(0)}$ is continuously modified through a stream of insertions and deletions of tuples.
As a design principle, we assume that insertions are common but deletions are rare.
With each insertion or deletion operation, the table evolves over time with a new \textbf{state} at each time step $i$:
$\fullData^{(0)},\fullData^{(1)},\ldots,\fullData^{(i)}, \fullData^{(i+1)},\ldots$
A \textbf{synopsis} is a data structure that summarizes the evolving table. 
For each $\fullData^{(i)}$, there is a corresponding synopsis $\Sigma^{(i)}$:
$\Sigma^{(0)},\Sigma^{(1)},\ldots,\Sigma^{(i)}, \Sigma^{(i+1)},\ldots$

In DAQP, the problem is to answer queries as best as possible from only the $\Sigma^{(i)}$.
For a query $q$, the estimation error is defined as the difference between the estimated result (using the synopsis) and the true result (using the current database state): 
\[ \textsf{Error}(q, \Sigma^{(i)}) = |q(\fullData^{(i)}) ~~ -  ~~ q(\Sigma^{(i)})|\]
We further assume that there is sufficient cold/archival storage to store the current state of the table $\fullData^{(i)}$.
This data can be accessed in an offline way for initialization, re-optimization, and logging purposes but not for query processing.

\begin{figure}[t]
    \centering
    \includegraphics[width=1\columnwidth]{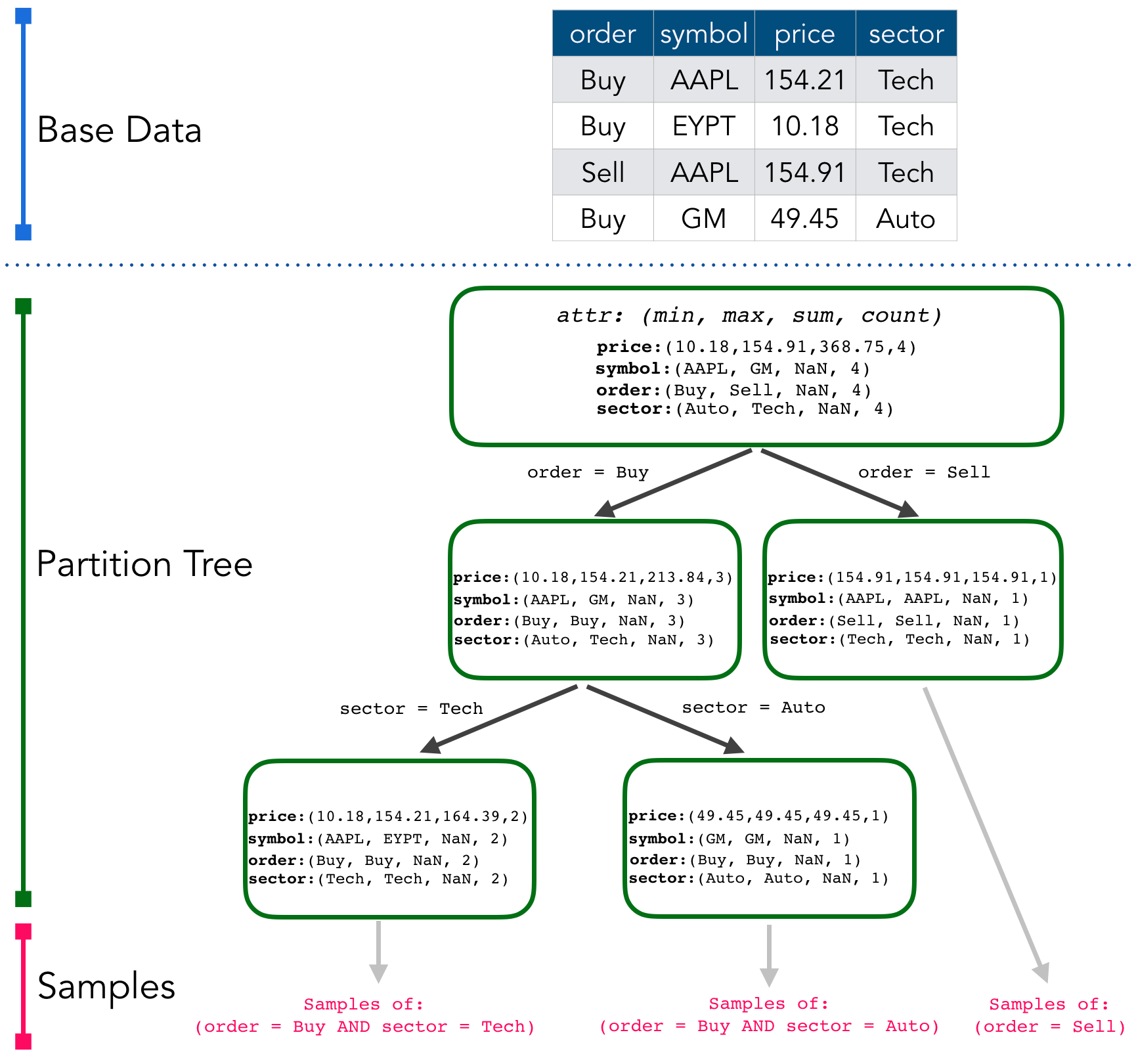}
    \caption{The core data structure in \sys is based on the PASS data structure~\cite{liang2021combining} that summarizes a dataset with a tree of aggregates at different levels of resolution (granularity of partitioning). Associated with the leaf nodes are stratified samples. The two stage synopsis structure can be optimally partitioned to minimize error.  \label{fig:teaser}}
\end{figure}

There are a few notable differences from the ``streaming'' setting. First, most data streaming models do not support arbitrary record deletion, i.e., as studied in~\cite{poepselland}. We find that in many use-cases limited support for deletion is needed due to records that are invalidated through an out-of-band, asynchronous data process like fraud detection or financial auditing. Next, most streaming settings enforce a single pass over the data with limited overall memory. We do not make this assumption and allow for archival storage and slow access to old data. This is a more realistic AQP setting where all data are stored, however, there is limited working memory for a fast, approximate query answering service. 

\subsection{Related work}
There is significant research in histograms and their variants that is highly relevant to this project~\cite{jagadish1998optimal, koudas2000optimal, jagadish2001global}.
V-Optimal histograms construct buckets to minimize the cumulative variance~\cite{jagadish1998optimal}. There are works on
multi-dimensional histograms~\cite{lazaridis2001progressive}, and histograms on the streaming/dynamic setting~\cite{guha2006approximation, gilbert2002fast}.
Like histograms, \sys constructs partitions over attribute domains and aggregates within the partition. However, we contribute different partition optimization criteria than typically used in histograms and novel techniques based on geometric data structures to scale partitioning into higher dimensions. Furthermore, our system works in the general dynamic setting, unlike~\cite{gilbert2002fast} where the number of total items must remain the same.
Another related area of research is into mergeable summaries that compute a partition of the data and optimize sampling at a data partition level~\cite{rong2020approximate, liang2020fast, agarwal2012mergeable, gan2020coopstore, poepselland}. \reviewtwo{The \syn used in \sys very much behaves like a mergeable summary but it combines low-error statistics over pre-computed partitioning of the space, along with stratified sampling to improve the error over comparing to traditional mergeable summaries for aggregation queries.}
Furthermore, some prior work mostly focuses on a streaming setting without support for deletion~\cite{poepselland}.
Similarly, sketches~\cite{cormode2011sketch, cormode2011synopses} have been used to find a summary of data to answer approximately a variety of queries efficiently. However, they also do not handle arbitrary range queries using space independent of the size of the full database. Mergeable summaries and sketches usually focus on optimizing different types of problems such that frequency queries, percentile queries, etc.
This paper shows how to operationalize a general DAQP system for aggregation queries with both systems and algorithmic contributions relating to the design of dynamic synopses and their continuous optimization. Our system can handle arbitrary updates and can estimate any arbitrary predicate query with provable confidence intervals.

In databases, a number of tree-based indexes, such as the improved $R^*$ tree~\cite{jurgens1998r}, have been used to support range aggregation queries efficiently.
The space of such indexes is (super-)linear with respect to the input items.
The query procedures need to have access to the entire tree-index that contains the entire dataset. That leads to high communication throughput or high I/O operations comparing to \sys where the queries are executed in a small synopsis of data stored in a local machine or RAM with zero communication throughput.
In another line of work, tree-based data structured are used to return a set of $k$ uniform samples in a query range. More specifically, in~\cite{joshi2008materialized, wang2015spatial} the authors construct indexes such that given a query range $Q$ and a parameter $k$, they return $k$ uniform samples from the input items that lie inside $Q$. These samples can be used to estimate any aggregation query in the range query $Q$.
There are several issues with these indexes in our setting.
First, the design of the index in~\cite{joshi2008materialized} makes their structure inherently static and it cannot be maintained efficiently.
Furthermore, the estimation error in both indexes is the same as the error in the simple uniform random sampling schema. In Section~\ref{sec:experiments}, we show that the error of our new index in real data sets is always less than half of the error in uniform random sampling, so our new index always outperforms these range sampling indexes.
In addition, the communication throughput or the I/O operations during a query procedure of these indexes is a function of $N$, i.e. the size of the input set, so they cannot be used on big data.
Finally, the dynamic tree structure in~\cite{lazaridis2001progressive} can store a synopsis of data in a tree-based index and use only this synopsis/index to return estimations of a range aggregation queries, which is also the case in our system \sys. However, there are two main differences. The index in~\cite{lazaridis2001progressive} returns a good estimation only if an assumption about the spatial/value-domain distribution of the data is made, while \sys uses stratified sampling and it always returns unbiased estimators with small error without assuming any distribution over the data. Furthermore, while their partition tree in~\cite{lazaridis2001progressive} can handle dynamic updates, its structure/partitioning remains unchanged. In our index we maintain a near-optimal partitioning over the updates. As we show in Section~\ref{sec:exp-del}, running experiments on real data, re-partitioning is essential in order to maintain a small error.

Dynamic AQP problems have been discussed in prior work~\cite{garofalakis2001approximate}, however, most existing systems have focused on a static data warehousing setting~\cite{agarwal2013blinkdb}. The Aqua system~\cite{acharya1999aqua} did consider the maintenance of its synopsis data structures under updates. However, these synopses were relatively simple and only samples and histograms. Furthermore, we discuss systems issues such as catch-up processing that was not discussed in~\cite{acharya1999aqua} or any subsequent work~\cite{gibbons2002fast}.  

Many new AQP techniques use machine learning.
The basic ideas exist for a while, e.g., ~\cite{jermaine2003robust,jin2006new}. Recently, there are more comprehensive solutions that train from a past query workload~\cite{park2017database} or directly build a probabilistic model of the entire database~\cite{hilprecht2019deepdb, yang2019deep}.
We show that these systems are not optimized for a dynamic setting. Even when they can be updated efficiently with warm-start training, their throughput is much lower than \sys.

\subsection{Partition Trees for AQP}
\label{sec:partTreeAQP}
We propose a new dynamic data synopsis and optimization strategy that is an extension of our previous work~\cite{liang2021combining}. In particular, we proposed a system called PASS (which we call \prior for ``static partition tree''). 
\prior synopses are related to works such as~\cite{lazaridis2001progressive} in the data cube literature and hybrid AQP techniques~\cite{peng2018aqp++}.
We showed that with appropriate optimization of the partitioning conditions, an \prior could achieve state-of-the-art accuracy in AQP problems.

\subsubsection{Construction}
An \prior is a synopsis data structure used for answering aggregate queries over relational data.
To use \prior, the user defines an \emph{aggregation column} (numerical attribute to aggregate) and a \emph{set of predicate columns} (columns over which filters will be applied).
An \prior consists of two pieces: (1) a hierarchical aggregation of a dataset, and (2) a uniform sample of data for each of the leaf partitions (effectively a stratified sample over the leaves).
The system returns a synopsis that can answer SUM, COUNT, AVG, MIN, and MAX aggregates over the aggregation column filtered by the predicate columns.
Figure \ref{fig:teaser} illustrates a partition tree synopsis over toy stock-order data.

To understand how this structure is useful, let us overview some of its formal properties.
A \emph{partition} of a dataset $\fullData$ is a decomposition of $\fullData$ into disjoint parts $\fullData_1,...,\fullData_B$.
Each $\fullData_i$ has an associated partitioning condition, a predicate that when applied to the full dataset as a filter retrieves the full partition.
Partitions naturally form a hierarchy and can be further subdivided into even more partitions, which can then be subdivided further.
A \emph{static partition tree} $\PASStree$ is a tree with $B$ nodes (where each node corresponds to a partition) with the following invariants: (1) every child is a subset of its parent, (2) all siblings are disjoint, and (3) the union of all siblings equals the parent.

In an \prior, each node of the tree is associated with SUM, COUNT, MIN, and MAX statistics over the items in $\fullData$ that lie inside the node.
\prior synopses have a flexible height to tradeoff accuracy v.s. storage.
In shorter trees, the leaf nodes of an \prior can cover large subsets of data and vice versa in deeper trees.
Note how each layer of the tree in Figure \ref{fig:teaser} aggregates the lower layer over coarser-and-coarser aggregation conditions (first by ``sector'' and then by ``order type'').

This structure works well when the queries align with partition boundaries. For example, a user aggregating total orders by ``order type'' in Figure \ref{fig:teaser} would get an exact answer with no approximation. The challenge is to answer queries with predicates that partially intersect partitions.
Due to the tree invariants, the set of partial intersections can be fully determined at the leaf nodes.
To estimate the contributions of these partial intersections, an \prior associates a uniform sample of tuples \emph{within that partition} for each leaf node.

\subsubsection{Query Processing}
Using an \prior, a user can estimate the result of a query as follows. Essentially, the query processing algorithm identifies ``fully covered'' nodes that are contained in the query predicate and ``partially covered'' ones that overlap in some way. Exact statistics from the ``fully covered'' nodes can be used, while estimates can be used to determine the contribution of ``partially covered'' ones. We present SUM, COUNT, AVG for brevity, but it is also possible to get estimations for MIN and MAX.

\vspace{0.5em} \noindent  \textbf{Step 1: Frontier Lookup. }  Given a query predicate $q$, traverse the tree top-down to retrieve two sets of nodes partitions: $R_{cover}$ (nodes that fully cover the predicate) and $R_{partial}$ (nodes that partially intersect the predicate). Nodes that do not intersect the predicate can be ignored.

\vspace{0.5em} \noindent \textbf{Step 2: Partial Aggregation } For each partition in $R_{cover}$, we can compute an exact ``partial aggregate'' for the tuples in those partitions. For a SUM/COUNT query $q$: $agg =  \sum_{R_i \in R_{cover}} SUM(R_i)$,
for an AVG query, we weight the average by the relative size of the partition: $agg =  \sum_{R_i \in R_{cover}} SUM(R_i) \frac{N_i}{N_q}$, where $N_i$ is the size of the partition $R_i$, $N_q$ is the total size in all relevant partitions of query $q$, and $SUM(R_i)=\sum_{t\in R_i\cap \fullData}t.a$ is the sum of the aggregation values of all tuples in the partition $R_i$.

 \vspace{0.5em} \noindent \textbf{Step 3: Sample Estimation. } Each partition in $R_{partial}$ is a leaf node with an associated stratified sample. Within each stratified sample, we use standard AQP techniques to estimate that partition's contribution to the final query result~\cite{agarwal2013blinkdb}. For completeness, we include those calculations here. Suppose a partition $R_i$ has a set $\samples_i$ of $m_i$ samples and there are $N_i$ total tuples in $R_i$.
 We can formulate COUNT, SUM, AVG as calculating an average over transformed attributes:
$f(\samples_i)=\frac{1}{m_i}\sum_{t\in \samples_i}\phi_q(t)$,
where $\phi_q(\cdot)$ expresses all the necessary scaling to translate the samples in query $q$ into an average query population. Let $Predicate(t, q)=1$ if tuple $t$ satisfies the predicate of query $q$, and $0$ otherwise, we have
\begin{itemize}
    \item COUNT: $\phi_q(t)=Predicate(t,q)\cdot N_i$
    \item SUM: $\phi_q(t)=Predicate(t,q)\cdot N_i\cdot t.a$
    \item AVG: $\phi_q(t)=Predicate(t,q)\cdot \frac{m_i}{\sum_{t\in \samples_i}Predicate(t,q)}\cdot t.a$
\end{itemize}
We run such a calculation for each partition that is partially covered. These results are combined with a weighted combination like before. For SUM/COUNT queries it is: $samp = \sum_{R_i \in R_{partial}} f(S_i)$. And for AVG queries, it is: $samp = \sum_{R_i \in R_{partial}} f(S_i) \cdot \frac{N_i}{N_q}$. $N_i$ and $N_q$ can be exactly retrieved from the statistics computed for each partition.

\vspace{0.5em} \noindent \textbf{Step 4: Final Estimate. } The results can be found by taking a sum of the two parts: $result = samp + agg.$ For this result estimate, confidence intervals can be calculated using standard stratified sampling formulas.

\vspace{0.25em} \noindent
\textbf{PASS and \sys comparison. }
As we noted, \sys is an extension of PASS in the dynamic setting.  The main differences and novelties of our new system \sys comparing to PASS are the following: i) PASS finds a static partitioning that is not changing after insertions and deletions of items. In \sys we propose algorithms (Subsection 5.4) that automatically check if a re-partitioning is needed after the dynamic updates.  ii) Even if re-partitioning is allowed in PASS, the algorithms we proposed in ~\cite{liang2021combining} do not run efficiently in the dynamic setting. Here we propose dynamic indexes and algorithms with theoretical guarantees that perform much faster than the algorithms in PASS. iii) Even if we use our new dynamic algorithms in PASS, there is no mechanism to compute the exact statistics of the nodes after a re-partitioning happening and there is no mechanism handling the updates as the re-partitioning is executed. \sys can improve the estimators in the nodes of \syn after a re-partitioning, while handling new dynamic updates and new queries. iv) Last but not least, we implement \sys on Apache Kafka, so it can be used by real database systems.
\section{System Architecture}
\label{sec:implementation}
In this section, we describe the \sys architecture.
\vspace{-0.5em}
\subsection{Construction and Optimization API}
\label{subsection:constructAPI}
\reviewtwo{
First, we overview how users construct synopsis data structures in \sys.
In order to simplify the description of our system, for now, we consider that
unlike systems like BlinkDB~\cite{agarwal2013blinkdb}, \sys does not use a single synopsis to answer all queries. Instead, \sys constructs a different synopsis for each different query template that the user is interested in. In Subsection~\ref{subsection:discussion} we describe how we can construct a single synopsis that can answer different query templates.
Much like index construction in a database, users choose which attributes to include in the synopsis structure. Each synopsis can answer query templates of the following form:}
\begin{verbatim}
SELECT SUM/COUNT/AVG/MIN/MAX(A) FROM D
WHERE Rectangle(D.c1,...,D.cd)
\end{verbatim}
\reviewtwo{
where $A$ is an aggregation attribute and $c_1,...,c_d$ are predicate attributes used in some rectangular predicate region (a conjunction of $>,<,=$ clauses). The \textbf{dimensionality} of a synopsis is the number of predicate attributes $d$.
To construct a synopsis, the user must define the following basic inputs:
\vspace{-0.5em}
\begin{itemize}
    \item \textbf{Aggregation Attribute and Aggregation Function. } An attribute $A$ that is the primary metric for aggregation. 
    \item \textbf{Predicate Attributes. } A collection of $d$ columns $c_1,...,c_d$ that are used to filter the data prior to aggregation. 
    \item \textbf{Memory Constraint. } The maximum amount of space that the synopsis can take.
\end{itemize}
The user can also optionally define a query processing constraint, i.e, the maximum bytes of data that the system should process in answering a query. 
Finally, the user has the option to set a historical data limit which is how much historical data to include in the synopsis, i.e., the earliest time-step of data included in the system.
In Subsection~\ref{subsection:discussion} we also show how \sys uses these constraints to generate a synopsis with low error.
}

Beyond these basic knobs that are relevant to most AQP systems, there are two other considerations discussed in this paper:  

\vspace{0.25em} \noindent \textbf{Catch-Up Processing. } Constructing a synopsis will require some amount of computational time. While incremental maintenance might be efficient, constructing the initial synopsis $S^{(0)}$ from the initial database state $\fullData^{(0)}$ might be very expensive if there is a significant amount of initial data. As the initial $S^{(0)}$ is being constructed new data will arrive, and the system will require additional processing to catch up. \sys optimizes the catch-up process using a multi-threaded system and approximate internal statistics for the partition tree. This process minimizes the amount of time where the system is unable to process new data or queries. The user decides how much processing to expend during catch up, the quicker the system is ready, the higher the error will be.

\vspace{0.25em} \noindent \textbf{Throughput. } The maximum data throughput is the maximum rate of insertions and deletions that the system can support. Throughput depends on the complexity of the synopsis used.

\subsection{Data and Query API}
For processing queries and data, we adopt the PSoup architecture where both queries and data are streams~\cite{chandrasekaran2003psoup}.
\sys supports three types of requests: insertion of a new tuple, deletion of an existing tuple and querying of the existing tuples. Thus, there are three Kafka topics \textsf{insert(tuple)}, \textsf{delete(tuple)}, and \textsf{execute(query)}.

The use of Kafka, with its timing and delivery guarantees, simplifies the query processing semantics.
The system will process the incoming stream of queries in order. Each query will have an arrival time $i$, which is the current database state at the time at which the query is issued. Therefore, we define $q^{(i)}_j$ as the $j$th query in the sequence that arrives at database state $i$. Query results should reflect all of the data that has arrived until the time point $i$.

\section{Dynamic Partition Trees}\label{sec:dpt}
We discuss how Dynamic Partition Trees (\syn) are constructed, how they answer queries, and how they are maintained under updates. Structurally, a \syn is essentially the same data structure as an \prior; however, the way that the partition statistics and samples are represented differ to allow for incremental maintenance. Figure \ref{fig:dpt} summarizes the basic update process. 

\begin{table}[t]
\footnotesize\vspace{-0.5em}
\caption{Table of basic notation}
\centering
 \begin{tabular}{|c|c|c|c|} 
 \hline
 $\fullData$ & Full database & $H_i$ & $H\cap R_i$ \\ \hline
 $N$ & $|\fullData|$ & $m_i$ & $|\samples_i|$ \\ \hline
 $\samples$ & Set of reservoir samples & $h_i$ & $|H_i|$ \\ \hline
 $H$ & Set of catch-up samples & $m$ & $|\samples|$ \\ \hline
 $R_i$ & Partition/bucket/rectangle & $t$ & Tuple in $\fullData$ \\ \hline
  $|R_i|$ & $|R_i\cap \samples|$ & $t.a$ & Aggregation value of tuple $t$ \\ \hline
 $N_i$ & $\fullData\cap R_i$ & 
 $\PASStree$ & Partition tree in \syn \\ \hline
 $\samples_i$ & $\samples\cap R_i$& &\\\hline
\end{tabular}
\label{Table:Notation}
\end{table}

\begin{figure}[t]
    \centering
    \includegraphics[scale=0.25]{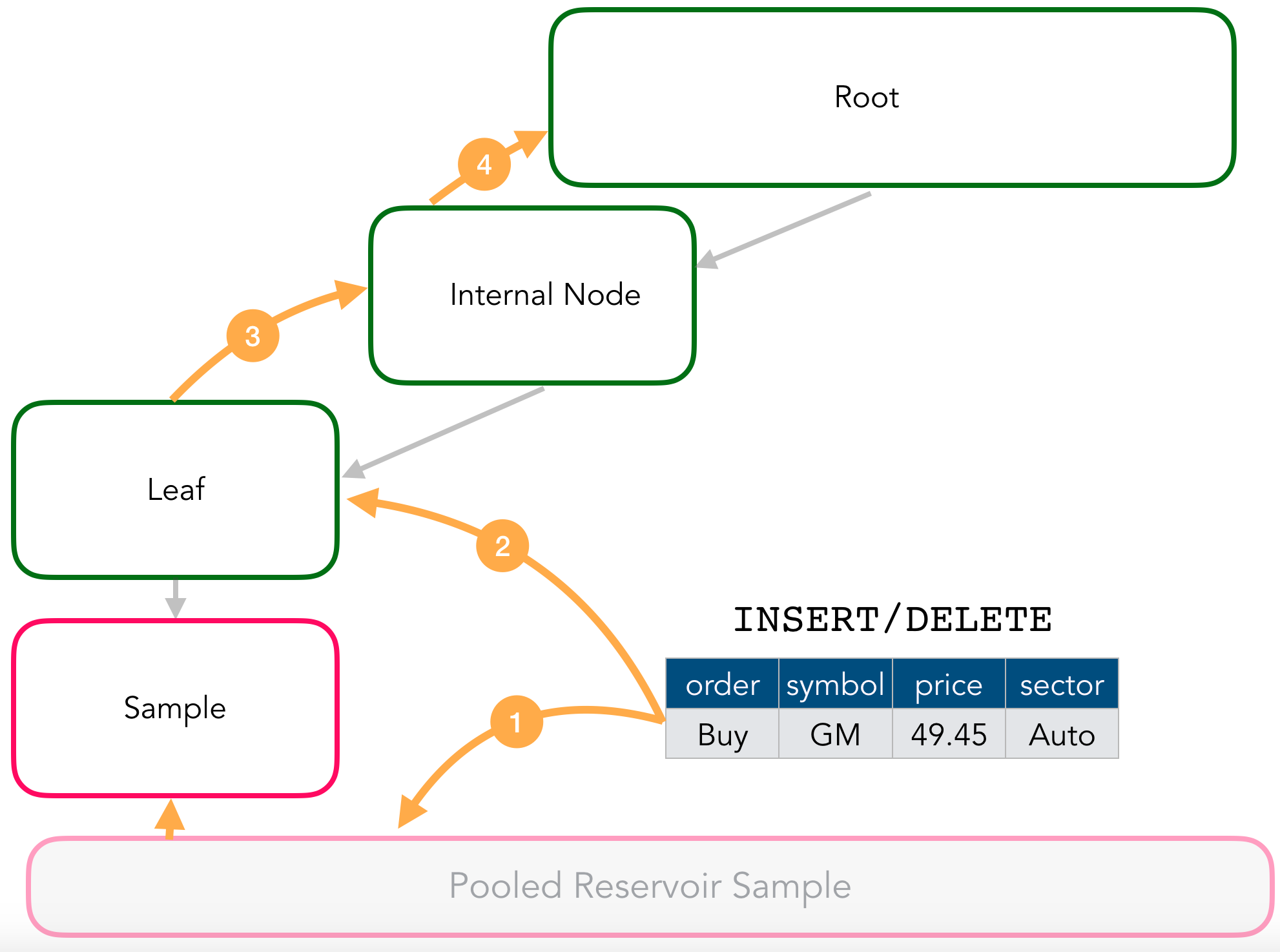}
    \caption{The \syn update process for an insertion or deletion. (1) A set of samples is maintained using a reservoir sampling algorithm. (2) The leaf node statistics are incrementally updated. (3) The updated statistics from the leaf node propagate to the parents. (4) Updated statistics from the parents propagate all the way to the root.  \label{fig:dpt}}
\end{figure}

\subsection{Incrementally Maintaining Nodes}
\label{sec:IncrMaintNodes}
Each node defines a partition and contains statistics (the SUM, COUNT, MIN, and MAX aggregates) of the data contained in that partition.
The key challenge is to keep these statistics up-to-date in the presence of insertions and deletions. 
When an insertion or deletion arrives, an entire path of nodes from the leaf to the root will have to be updated.

\vspace{0.5em} \noindent \textbf{\syn Nodes: } First, we discuss how we represent the statistics in a \syn node. Since the SUM and COUNT are easy to incrementally maintain under both insertions and deletions, we simply store a single SUM and COUNT value for each aggregation attribute. The MIN and MAX values are harder to incrementally maintain. To store the MIN and MAX values, we store the top-k and the bottom-k values in a MIN/MAX heap respectively. The top value of these heaps is equal to the MIN and MAX of all the data in the node. 

\vspace{0.5em} \noindent \textbf{Insert New Record: } When a new record is inserted,  we start from the root and we find the leaf node that the record is contained in. Then, we increment the SUM and COUNT statistics accordingly. Finally for MAX/MIN, we push the new aggregation values onto the heap. If the heap exceeds the size limit $k$, then the bottom value on the heap is removed. 

\vspace{0.5em} \noindent \textbf{Delete Existing Record: } When an existing record is deleted, we find the leaf node that the record is contained in. Then we decrement the SUM and COUNT statistics accordingly. Finally for MAX/MIN, if that aggregation value is contained in the heap it is removed from the heap. \reviewone{This might make the heap smaller than $k$.  Repeated deletes from the same node might fully empty the heap. We stop removing values from the heap when there is only one value left. When the heap reaches a single element the MIN/MAX estimates received from the nodes are outer approximations where the estimated value is larger than the MAX and smaller than the MIN.}
\vspace{-1em}
\subsection{Maintaining Stratified Samples}
Next, we describe how to maintain the samples associated with leaf nodes. \reviewtwo{We maintain samples using the \emph{proportional allocation} of stratified sampling. The number of samples in each stratum should be proportional to the size of the stratum over the size of the population.
We use a modified version of the well-known technique of \emph{reservoir-sampling}~\cite{vitter1985random} under updates~\cite{gibbons2002fast}.
The details of how we implement this are interesting.
Conceptually, each leaf node is associated with a physically disjoint sample of just that partition, i.e., a stratified sample.
Instead of implementing physical strata for the stratified sampling, we implement large enough virtual partitions of a single global sample. 
This global sample can be maintained using a reservoir sampling algorithm and makes it easier to control the overall size of the synopsis under insertions/deletions as well as simplifies concurrency control}.

\vspace{0.5em} \noindent \textbf{Sample Representation: } The \syn maintains a ``pooled'' sample (all the relevant samples in a single data structure). This set of samples has a target size of $2m$ tuples. At the construction time, we choose a set $\samples$ of $2m$ uniform random samples from $\fullData$. The update procedure ensures that there are always between $m\leq |\samples|\leq 2m$ samples. The leaf nodes index into this ``pooled'' sample selecting only the relevant data to their corresponding partitions. \reviewtwo{Since we get a set of uniform samples, if we make sure that our strata are large enough, the number of samples we get in each stratum satisfies the proportional allocation requirements, up to a factor of $2$, with high probability. We formally show it in Appendix~\ref{appndx:proportional}. This is also the case in the next update operations.}

\vspace{0.5em} \noindent \textbf{Insert New Record: } Suppose we insert a new tuple $t$. If $|\samples|<2m$ we add $t$ in $\samples$. If $|\samples|=2m$, we choose $t$ with probability $\frac{|\samples|}{|\fullData|}$. \reviewone{If it is selected then we replace a point from $\samples$, chosen uniformly at random, with $t$.} 

\vspace{0.5em} \noindent \textbf{Delete Existing Record: } Next, suppose that we delete a tuple $t$ from $\fullData$. If $t\notin \samples$ we do not do anything. If $t\in \samples$ then we check the cardinality of $\samples$. If $|\samples|>m$ then we only remove $t$ from $\samples$. If $|\samples|=m$ then we skip the set $\samples$ and we re-sample $2m$ items from $\fullData$. As shown in~\cite{gibbons2002fast} this procedure always maintain a set of uniform random samples.
\reviewtwo{As shown in Section~\ref{sec:re-partition} we always check if a stratum is under-represented after a number of deletions. If this is the case then we need to re-sample and re-partition.}
Using a simple dynamic search binary tree of space $O(m)$ we can update the samples $\samples$ stored in $\PASStree$ in $O(\text{height}(\PASStree))$ time.

\begin{figure}[t]
    \centering
    \includegraphics[scale=0.25]{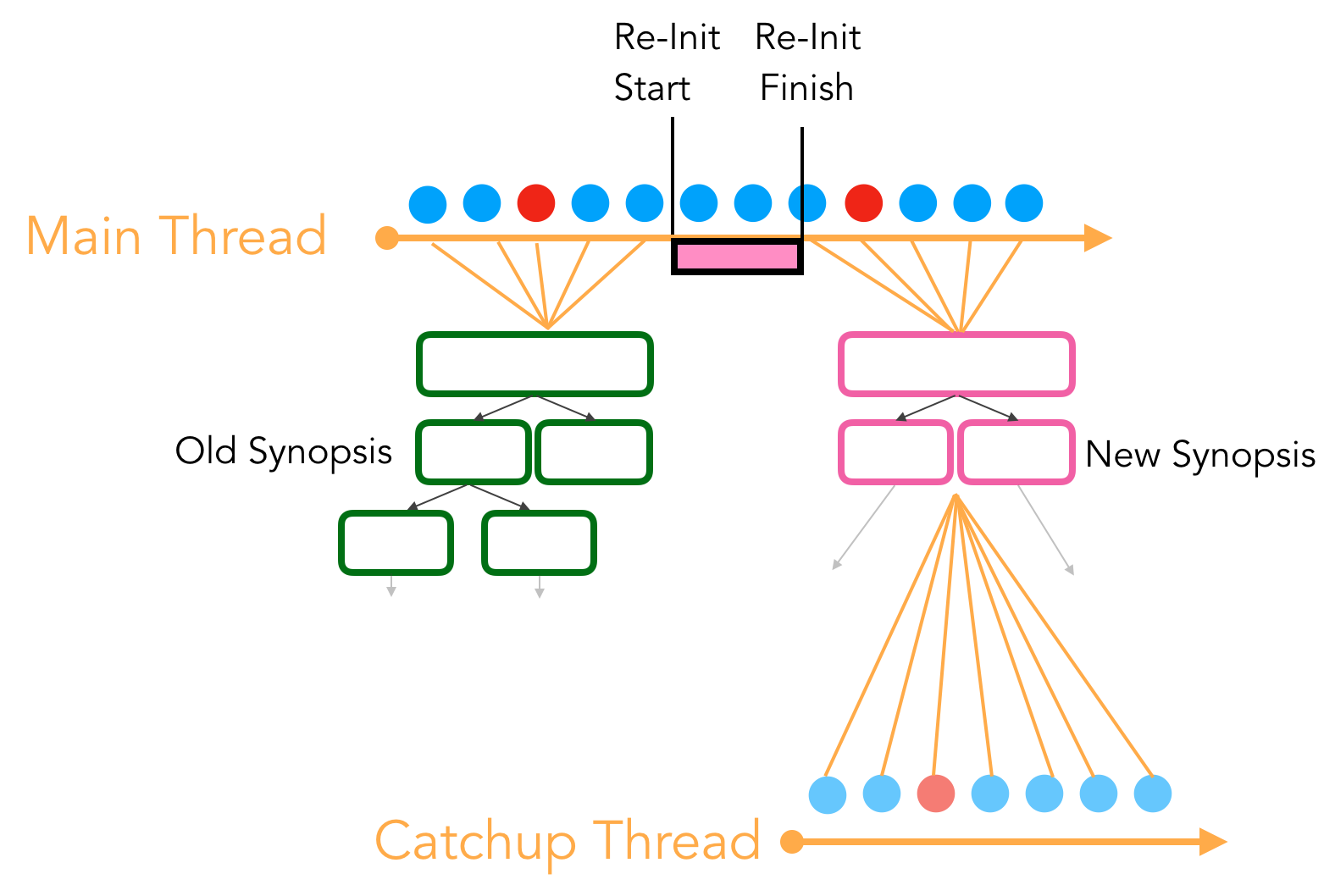}
    \caption{\sys synopses can be re-initialized online using a multi-threaded implementation to minimize unavailability\label{fig:catchup}}
\end{figure}

\subsection{Re-initialization and Catch-Up}\label{sec:catchup}
As we noted before, repeated deletes on the same leaf partition can degrade the accuracy of the synopsis. As we will see in the next section, it is also possible for repeated insertions to degrade the accuracy as well. In such cases, re-initialization of the \syn may be needed where the data structure is re-built and re-optimized over existing data. 

Enabling periodic re-initialization is crucial for reliable long-term deployment but is challenging because new data will not simply stop arriving during the re-initialization period.
We employ a multi-threaded approach to minimize any period unavailability for processing new data arrival as well as new queries (Figure \ref{fig:catchup}).  When re-initialization is triggered, the main processing thread initiates the construction of a new \syn synopsis
and the following steps are performed:
\vspace{-0.5em}
\begin{enumerate}
	\item Optimization Phase (In Parallel)
	\begin{itemize}
\item The partition optimization algorithm analyzes the data in the pooled reservoir sample to determine the optimal new partitioning criteria. It returns a new empty \syn with no node statistics.
	\item In parallel with (Step 1), the old synopsis is maintained under all insertions and deletions that happen during the optimization algorithm. Queries can still be answered with the old synopsis.
	\end{itemize}
    \item  (Blocking) Approximate node statistics are populated into the new synopsis using the pooled reservoir sample $S$ (note, that this will reflect any data that arrived during the optimization phase). This is the only blocking step in the re-initialization routine and new data and queries will have to wait until completion.
    \item The old synopsis is discarded.
	\item The system resamples a uniform sample of data from archival storage to be the new pooled reservoir sample. Queries and results can still be processed on the new synopsis even without a sample.
	\item Random samples of historical data are used to improve the node statistics in the background until a user-specified ``catch-up'' time. 
\end{enumerate}

This process is the key difference between an \prior and an \syn, where after catch-up the node statistics may be inexact.
However, this old data is propagated in a random order, which means that the SUM,COUNT,AVG values in each node will be unbiased estimates of their full data statistics.  The duration of the catch-up phase can be chosen by the user. For example, in our experiments, the catch-up phase does not stop until we get $0.1\cdot|\fullData|$ samples.
It is worth noting that queries close to the beginning of the catch-up phase will have a higher error, however queries towards the middle or the end of the catch-up phase will have a smaller error.
In Section~\ref{sec:re-partition}, we describe how to trigger re-initialization.
Furthermore, there is only one step (2) where the synopsis is unavailable to process queries and data (has the duration of 100s of milliseconds in our experiments).


\subsection{Answering Queries With a \syn}
\label{sec:estimators}
The query procedure does not access the entire data so the communication throughput (or the I/O operations) is zero.
The basic structure of the result estimator is the same as before, especially for the $R_{partial}$ partitions. However, there are a few key changes due to the nature of the catch-up phase. In \prior, for each partition in $R_{cover}$, we can compute an exact ``partial aggregate'' for the tuples in those partitions and combine the partial aggregates. In a \syn, this process changes
considering the estimations we get from the catch-up samples.
Overall, the estimation of a partition $R_i\in R_{cover}$ consists of i) estimation using the catch-up samples $H$ and the formulas of Section~\ref{sec:partTreeAQP}, ii) the exact statistics of the new inserted tuples in $R_i$, and iii) the exact statistics of the deleted tuples in $R_i$ (recall that the quantities in ii), iii) are stored and maintained as described in the Incrementally Maintaining Statistics in Section~\ref{sec:IncrMaintNodes}).
By taking the sum of i), ii) and subtracting iii) we get the unbiased estimation in partition $R_i$.


Let $H$ be the set of catch-up samples, $H_i\subseteq H$ be the subset of $H$ in partition $R_i$, and $h_i=|H_i|$. All basic notations are defined in Table~\ref{Table:Notation}.
The formulas for estimating COUNT and SUM queries in both $R_{cover}$, $R_{partial}$ from Section~\ref{sec:partTreeAQP} contain the factor $\frac{N_i}{m_i}$ or $\frac{N_i}{h_i}$, while the formulas for estimating the AVG contain the factor $\frac{N_i}{N_q}$. In \syn we do not have the exact values for $N_i$. Instead, we use an estimate of the size of the partition $R_i$ denoted by $\hat{N_i}$. In particular we use the catch-up samples $H$ to estimate $\hat{N_i}=\frac{h_i}{h}N$.

\subsubsection{Confidence Intervals}
While the estimators do not significantly change from an \prior to a \syn, the confidence intervals are calculated very differently. This is because there are now two sources of errors: estimation errors due to the stratified samples and estimation errors in the node statistics. Both these sources of errors have to be integrated into a single measure of uncertainty.
Assuming that all partitions are large enough, the central limit theorem can be used to asymptotically bound the estimation error for SUM/COUNT/AVG queries. Informally, the central limit theorem states that this asymptomatic error is proportional to the square-root of the ratio of estimate variance and the amount of samples used $\propto \sqrt{\frac{var(est_i)}{m_i}}$.
We simply have to match terms to this formula for all sample estimates and all node estimates because both are derived from samples.

\mparagraph{Error in Node Estimates} First, let's account for all the uncertainty due to catch-up. Recall that $H$ is the set of catch-up samples we have considered so far and $H_i\subseteq H$ is the samples in partition $R_i$ with $h_i=|H_i|$. We note that we do not store the set $H$ or the subsets $H_i$, instead we only use the new catch-up samples to continuously improve the statistics we store in the nodes. Using the notation in the previous section, we can calculate the catch-up variance $\nu_c$:
\[
\nu_c(q) = \sum_{R_i\in R_{cover}} w_i^2\frac{var(\phi_q(H_i))}{h_i}
\]
where $w_i= \frac{\hat{N_i}}{\hat{N_q}}$ for AVG queries and $w_i=1$ for SUM/COUNT queries. Calculating $\phi_q(H_i)$ is straight-forward. We simply store additional information that allows us to efficiently calculate the variance. For any node $i$ of $\PASStree$ we store $h_i$, $\sum_{t\in H_i}t.a^2$, $\sum_{t\in H_i}t.a$.

\mparagraph{Error in Sample Estimates}
For a partition $R_i \in R_{partial}$, let $\samples_i\subseteq \fullData$ be the set of samples in $\samples$ that lie in partition $R_i$ and let $m_i=|\samples_i|$.
Like the catch-up variance, we can calculate the sample estimate variance $\nu_s$:
$$\nu_s(q) =  \sum_{R_i\in R_{partial}}w_i^2\frac{var(\phi_q(\samples_i))}{m_i}$$
We can calculate an overall confidence interval as:
$$\pm z \cdot \sqrt{\nu_c(q)  + \nu_s(q) } $$
where $z$ is a normal scaling factor corresponding to the desired confidence level, e.g., $z=1.96$ for 95\%. As before, $w_i= \frac{\hat{N_i}}{\hat{N_q}}$ for AVG queries and $w_i=1$ for SUM/COUNT queries.
In Appendix~\ref{appndx:varest}, we show analytically all formulas for computing the variance under different types of queries.

\section{Optimal \syn Partitioning}
\label{sec:partAlg}
\newcommand{\poly}{\textrm{poly}}
We next describe a new dynamic partitioning algorithm designed for the dynamic setting.

\subsection{Preliminaries and Problem Setup}
\label{subsection:prelimandsetup}
The partitioning algorithm analyzes the pooled reservoir sample of data to determine how best to partition the dataset.
The goal of the partitioning algorithm is to find a partitioning such that the subsequent queries issued to the \syn have low-error. 
Surprisingly enough, the partitioning algorithm does not need an exact query workload to perform this optimization. 
It simply needs a focus aggregation function (e.g., SUM, COUNT, AVG) and finds a partitioning that minimizes the worst-case query error for sufficiently large predicates.
Given a set of $O(m)$ samples $S$, the goal is to construct a data structure that supports the following operations.
(i) Insert or delete a sample from $\samples$ efficiently, and (ii) when a partitioning request comes, it creates a near-optimum partition tree $\PASStree$ in $o(m)$ time.

Let $Q$ be a set of possible aggregate queries with a predicate. \reviewtwo{And, let $\Theta$ be the set of all \syn synopses with $k$ leaf nodes. Each $\syn$ consists of a hierarchical rectangular partitioning having $k$ partitions/buckets in the last level.} The main optimization we need to solve to find a good enough partitioning is to minimize the maximum error over the query workload:
\vspace{-0.8em}
\begin{equation}
    \min_{\PASStree \in \Theta} \max_{q \in Q} \textsf{Error}(q,\PASStree)
\end{equation}
The error is defined as the length of the confidence interval, as defined in the previous section. Since the catch-up variance is usually extremely smaller than the sample estimate variance, we focus on minimizing the maximum length of the confidence interval with respect to the sample estimate variance $\nu_s(\cdot)$.
\reviewone{Indeed, the catch-up phase is running in the background retrieving a large number of samples as the time passes. Hence, the estimations in every node of the \syn are constantly improving reducing $var(\phi_q(H_i))$ for each partition $R_i$. On the other hand, the number of samples $S$ that we are storing explicitly are limited making the sample estimate variance larger than the catch-up variance.}
For simplicity, when we say variance we always mean the sample estimate variance.

Instead of looking over all possible queries to minimize the maximum error, one only needs to focus on single partitions to ensure they do not have ``high-variance'' sub-partitions (see Appendix~\ref{appndx:partAlg}).
Indeed by considering only these sub-partitions we can still get a $\sqrt{k}$-approximation for COUNT and SUM queries over the optimum partition considering all queries (for $k$ leaves). The approximation factor improves to $\sqrt{2}$ for $d=1$. For AVG queries the error of the optimum partition of this simplification is the same with the maximum error considering every possible query. 
The error of a query $q$ inside a leaf node (partition) $R_i$ is defined (expanding the equations from the definition of $\nu_s(\cdot)$) as
$$\frac{N_i^2}{m_i^3}\!\left[m_i\!\sum_{t\in q}\!t.a^2\!-\!\left(\!\sum_{t\in q}\!t.a\!\right)^2\right], \frac{m_i\!\sum_{t\in q}\!t.a^2\!-\!\left(\!\sum_{t\in q}\!t.a\!\right)^2}{m_i|q\cap \samples|^2}.$$
for SUM/COUNT and AVG queries, respectively.
Thus, the optimization problem reduces to finding partitions that do not contain a high-variance ``rectangle'' of data.



Given a rectangle $R$, the goal is to find a rectangular query within $R$ with maximum variance among all possible queries in $R\cap S$. For now, we assume that we have a dynamic index $\maxVar$ with near-linear space such that given a query rectangle $R$, it returns a query $q$ within $R$ with $\nu_s(q)\geq \frac{1}{\gamma}\optVar(R)$, \reviewone{for an approximation factor $\gamma>1$}, in $O(M)$ time, where $\optVar(R)$ is the variance of the maximum variance rectangular query in $R$. Let $\maxVar(R)$ be the variance of the query returned by the index $\maxVar$. We describe this index with more details in Subsection~\ref{sec:partHighDim}.


\subsection{Partitioning for $d=1$}
\label{subsection:1dpartitioning}
Now, we discuss how to solve the partitioning optimization problem in one dimension.
We present results for SUM and AVG queries. COUNT can be thought of as a special case of SUM with binary data.
The basic trick is to search over a discretized set of possible variance values.
For each value $e$, we try to construct a partitioning of $k$ partitions such that in each bucket the length of the longest confidence interval of a query is at most $e$.
By systematically reducing $e$ in each iteration, we control for the worst-case error.


\mparagraph{Bounding the Error}
The first step is to calculate the bounds for the maximum length of the largest possible confidence interval among queries that intersect one partition.
We assume that the aggregation value of any item in $\fullData$ is bounded by a maximum value $\mathcal{U}$ and a minimum non-zero value $\mathcal{L}$. We allow items to take zero values since this is often the case in real datasets but no item with positive value less than $\mathcal{L}$ or larger than $\mathcal{U}$ exists. We assume that $\mathcal{U}=O(\poly(N))$ and $\mathcal{L}=\Omega(1/\poly(N))$.
In Appendix~\ref{appndx:1dpartition}
we show that the length of the longest confidence interval is also bounded by $O(\poly(N))$ and $\Omega(1/\poly(N))$.


\forfullversion{
\begin{lemma}
\label{lem:bound}
Let $R$ be any rectangle and
let $\optVar_S(R), \optVar_A(R)>0$ be the variance of the SUM and AVG query respectively with the maximum variance in $R$. Then it holds that $\frac{\mathcal{L}}{\sqrt{2}}\leq \sqrt{\frac{\optVar_S(R)}{|R|}}\leq N\mathcal{U}$ and $\frac{\mathcal{L}}{\sqrt{2}N}\leq \sqrt{\frac{\optVar_A(R)}{|R|}}\leq \sqrt{N}\mathcal{U}$.
\end{lemma}
}
\forfullversion{
\begin{proof}
Without loss of generality let $q$ be the SUM or AVG query with the maximum variance in $R$.

First we focus on SUM queries. Let $N_R=|R\cap \fullData|$ be the number of total tuples in $R$. Unless $V(q)=0$, from~\cite{liang2021combining}, we know that there exists a query $q'$ with $|q'|=|R|/2$ such that $\frac{V(q)}{|R|}\geq \frac{N_R^2}{|R|^3}\frac{|R|}{2}\sum_{t\in q'}t.a^2$, where $\sum_{t\in q'}t.a^2>0$. We also have $\sum_{t\in q'}t.a^2\geq \mathcal{L}^2$ and $w_u=1$ leading to $\sqrt{w_u\frac{\optVar_S(R)}{|R|}}\geq\frac{N_R}{\sqrt{2}|R|}\mathcal{L}\geq \frac{\mathcal{L}}{\sqrt{2}}$. Furthermore, we have $V(q)\leq\frac{N_R^2}{|R|^2}|R|^2\mathcal{U}^2\leq N^2\mathcal{U}^2$ leading to $\sqrt{w_u\frac{\optVar_S(R)}{|R|}}\leq N\mathcal{U}$.

Next, we consider AVG queries. Unless $V(q)=0$, from~\cite{liang2021combining}, we know that there exists a query $q'$ with $|q'|=\delta m\leq |R|/2$ such that $\frac{V(q')}{|R|}\geq \frac{1}{|R|\delta^2m^2}\frac{|R|}{2}\sum_{t\in q'}t.a^2$, where $\sum_{t\in q'}t.a^2>0$. We also have $\sum_{t\in q'}t.a^2\geq \mathcal{L}^2$ and $w_u=1$ leading to
$\sqrt{w_u\frac{\optVar_A(R)}{|R|}}\geq \frac{1}{\sqrt{2}\delta m}\mathcal{L} \geq \frac{\mathcal{L}}{\sqrt{2}N}$.
Furthermore, we have $V(q)\leq\frac{|R|}{|q|^2}\mathcal{U}^2\leq |R| \mathcal{U}^2\leq N \mathcal{H}^2$ leading to $\sqrt{w_u\frac{\optVar_A(R)}{|R|}}\leq \sqrt{N}\mathcal{U}$.
\end{proof}
}


\mparagraph{Description of Algorithm}
We describe the partitioning algorithm for SUM queries. The procedure is identical for AVG queries.
For a parameter $\rho\in \mathbb{R}$ with $\rho>1$, let $E=\{\rho^t\mid t\in  \mathbb{Z}, \frac{\mathcal{L}}{\sqrt{2}}\leq \rho^t\leq N\mathcal{U}\}\cup\{0\}$, be the discretization of the 
range defined by
the lower and upper bound of the longest confidence interval (as defined in the previous paragraph).
We run a binary search on the values of $E$. For each value $e\in E$ we consider, we try to construct a partitioning of $k$ partitions such that in each partition the length of the longest confidence interval of a query is at most $e$. If there exists such a partitioning we continue the binary search with values $e'<e$. If there is no such a partitioning we continue the binary search with values $e'>e$. In the end, we return the last partitioning that we were able to compute.

It remains to describe how to check if a partitioning with $k$ buckets (intervals) with maximum length confidence interval at most $e$ exists.
A high level description of the algorithm is:
\begin{enumerate}
    \item For $i=1$ to $k$
    \begin{enumerate}
        \item Let $b_i$ be the $i$-th bucket with left endpoint
    $t_a$
        \item Binary search on samples $t_j$ to find the maximum bucket $b_i$ with error at most $e$
        \item If $\sqrt{\maxVar([t_a, t_j])}\leq e$
        \begin{enumerate}
            \item Continue search for values $>j$
            \item Else Continue search for values $<j$
        \end{enumerate}
    \end{enumerate}
    \item If the partitioning contains all samples construct $\PASStree$ using $b_i$ as its leaf nodes. Otherwise $\PASStree=\emptyset$.
\end{enumerate}
We start with the leftmost sample, say $t_1$, which is the left boundary of the first bucket. In order to find its right boundary we run a binary search on the samples $\samples$. Let $t_j$ be one of the right boundaries we check in the binary search, and let $b_1=[t_1, t_j]$. If $\sqrt{\MAXV(b_1)}\leq e$ then we continue the binary search with a sample at the right side of $t_j$ (larger bucket). Otherwise, we continue the binary search with a sample at the left side of $t_j$ (smaller bucket). When we find the maximal bucket with longest confidence interval at most $e$ we continue with the second bucket repeating the same process for at most $k$ buckets. In the end, if all samples in $\samples$ are contained in $k$ buckets then we return that there exists a partitioning (with $k$ buckets) with maximum variance at most $e$. If we cannot cover all samples in $k$ buckets then we return that there is no partitioning with $k$ buckets and maximum variance at most $e$.

\forfullversion{
\SetKwComment{Comment}{/* }{ */}
\begin{algorithm}
\caption{}\label{alg:one}
\KwData{$e, \samples, k, \eta$}
\KwResult{$\PASStree$}
$j \gets 0$\;
\For{$i=1$ to $k$}{
    $i_1\gets j+1$, $i_2\gets |\samples|-1$\;
    \uIf{$i_1>|\samples|$}{
        \textbf{break}\;
    }
    \While{$i_1\leq i_2$}{
        $i_3\gets \text{floor}((i_1+i_2)/2)$\;
        \uIf{$i_3-i_2+1\geq \eta$}{
            $W\gets \maxVar([t_a,t_{i_3}])$\;
            \uIf{$\sqrt{\frac{W}{i_3-i_2+1}}\leq e$}{
                $j\gets i_3$\;
                $i_1\gets i_3+1$\;
            }\Else{
                $i_2\gets i_3-1$\;
            }
        }\Else{
            $i_1\gets i_3+1$\;
        }
    }
    $b_i\gets [t_a, t_j]$\;
}
\uIf{$\left(\bigcup_i b_i\right)\cap \samples =\samples$}{
    Construct search binary tree $\PASStree$ using $b_i$ as leaf nodes\;
    \Return $\PASStree$\;
}\Else{
    $\PASStree\gets \emptyset$\;
}
\end{algorithm}
}


\textbf{Correctness. }
In Appendix~\ref{appndx:1dpartition} we use the monotonic property of the longest confidence interval (the bigger the bucket the larger the error) and we show $\sqrt{\optVar(b')}\leq \sqrt{\gamma \MAXV(b')}\leq \sqrt{\gamma} e'\leq \rho\sqrt{\gamma} \sqrt{\optVar(b^*)}$, where $b'$ is the bucket with the longest confidence interval in the returned partitioning, $e'$ is the smallest value in $E$ such that $\sqrt{\optVar(b^*)}\leq e'$, and $b^*$ is the bucket of optimum partitioning with the largest confidence interval.
For $d=1$ we have that $\gamma=4$ for SUM and AVG queries queries, so we get a partitioning where the maximum error is within $2\rho\sqrt{2}$ of the optimum error for SUM queries and within $2\rho$ of the optimum error for AVG queries.


\forfullversion{
For the base case,  let $b_1^*=[t_1, t_2]$ be the first bucket of partitioning $\mathcal{R}^*$. The procedure $\MAXV$ always underestimates the maximum variance in an interval so the binary search in our procedure will consider the right boundary to be greater than $t_2$.
Let $t_i$ be the right boundary of the $i$-th bucket in $\mathcal{R}^*$ and let assume that the $i$-th bucket in our procedure has a right boundary $t_j\geq t_i$.
We consider the $(i+1)$-th bucket in $\mathcal{R}^*$ with boundaries $[t_{i+1}, t_r]$. We show that the $(i+1)$-th bucket in our procedure has a right boundary at least $t_r$. Let $[t_a, t_b]$ be the boundaries of the $(i+1)$-th bucket in our procedure. We have $t_a\geq t_{i+1}$. If $t_a=t_{i+1}$ then $t_b\geq t_r$ as in the basis case. If $t_a>t_{i+1}$ then because of the monotonic property of the confidence intervals and the fact that the $\MAXV$ procedure underestimates the maximum variance we also have that $t_b\geq t_r$.
}


\textbf{Running time. }
Since, $\mathcal{L}, \mathcal{U}$ are polynomially bounded on $N$ we have that $|E|=O(\log_\rho N)$ and it can be constructed in $O(\log_\rho N)$ time. The binary search over $E$ takes at most $O(\log \log_\rho N)$ steps.
We can decide if there exists a partitioning with error $e$ in $O(kM\log m)$ time. Overall, the running time of our algorithm is $O(kM\log m \log\log_\rho N)$. If $\rho$ is a constant, for example $\rho=2$, then the running time is $O(kM\log m \log\log N)$. In Appendix~\ref{appndx:partAlg}
we have that in $1$-dimension $M=O(\log m)$ for SUM and AVG queries. Notice that if we skip the $\log$ factors the running time depends only linearly on the number of buckets $k$ and the approximation factor is constant.

\subsection{Partitioning in Higher Dimensions}
\label{sec:partHighDim}

\subsubsection{Indexing To Find Maximum Variance}\label{sec:NewupdateVar}
We describe the core index $\maxVar$ that we use in all our partitioning algorithms for any dimension $d\geq 1$. All the details and technical proofs will be shown in Appendix~\ref{sec:updateVar}.
The exact description of the index depends on the type of aggregation queries we focus on. 

For SUM/COUNT queries,
we propose a simple index to find the query with the largest variance in a query rectangle. In particular, we build a dynamic range tree on $\samples$. Given a query rectangle $R$, we split it into two smaller rectangles $R_1, R_2$ such that $|R_1\cap \samples|=|R_2\cap \samples|=|R\cap \samples|/2$. Using a dynamic range tree~\cite{de1997computational} we return the rectangle $R_i$ (either $R_1$ or $R_2$) with the largest variance. We can show that $\nu_s(R_i)\geq \frac{1}{4}\optVar(R)$. The running time and the update time is $O(\log^d m)$. 

For AVG queries, the algorithm proposed in~\cite{liang2021combining} cannot be extended to the dynamic case. Hence we propose a new dynamic index with a better approximation factor. Similarly to ~\cite{liang2021combining}, we assume that every valid query that is contained in a bucket of the partitioning must contain at least $2\delta m$ samples (for a small parameter $\delta<1$), otherwise the estimation is not accurate.
For simplicity, we use the notation $\tilde{O}(\cdot)$ to hide $\log(m)$ factors.
In Appendix~\ref{sec:updateVar} we show the following crucial observation: for any rectangle $q$ inside a query rectangle $R$ with $|q\cap \samples|=\delta m$ that maximizes $\sum_{t\in q\cap \samples} t.a^2$, it holds that $\nu_s(q)\geq \frac{1}{4}\optVar(R)$. Hence, we build a dynamic index
so that given a query rectangle $R$ it returns a rectangle that contains $\delta m$ samples and the sum of squares of their aggregate values is close to the maximum sum.

We build a dynamic range tree $T'$ over the samples $\samples$, storing the number of samples in each node of the tree. 
Furthermore, we build another empty dynamic range tree $T$. We will use $T$ to store weighted rectangles (as points in $2d$) that contain at most $\delta m$ samples. More specifically, we store in $T$ the \emph{canonical} rectangles of $T'$ that contain at most $\delta m$ samples. Notice that there are $\tilde{O}(m)$ nodes in $T'$ hence $T$ uses $\tilde{O}(m)$ space.
When we have an insertion or deletion in $T'$ there are only $\tilde{O}(1)$ nodes/rectangles that are updated, hence we can update both $T'$ and $T$ in $\tilde{O}(1)$ time.
Given a query rectangle $R$ we use $T$ to find a rectangular query $q^*$ with the largest sum inside $R$ in $\tilde{O}(1)$ time. From the definition of a range tree, for any rectangle there is a partitioning of $\log^{d+1} m$ canonical rectangles from $T'$. Hence we can show that $\nu_s(q^*)\geq \frac{1}{4\log^{d+1} m}\optVar(R)$.
The exact complexities depend on the dynamic range tree structure we use; our data structure has roughly $O(m\log^{3d} m)$ space, $O(\log^{3d} m)$ update time, and $O(\log^{2d} m)$ query time.

\subsubsection{Partitioning}
We construct a partitioning by building a k-d tree using the dynamic procedure $\MAXV$ as we described above.
Here, we use our improved index $\maxVar$ to construct a k-d tree.
The high level description of the algorithm is:
\begin{enumerate}
    \item Max Heap $C$ containing partition $R_1$ covering all items in $\fullData$
    \item For $j=2$ to $k$
        \begin{enumerate}
            \item Extract partition $R_i$ with maximum $\maxVar(R_i)$ from $C$
            \item Create a partitioning of $R_i$ of two partitions $R_{i_1}$, $R_{i_2}$ by splitting on the median of $R_i$
            \item Insert $\maxVar(R_{i_1})$, $\maxVar(R_{i_2})$ in $C$
            \item Set $R_{i_1}$, $R_{i_2}$ as children of $R_i$ in $\PASStree$
        \end{enumerate}
\end{enumerate}

We can show that such a tree construction returns a partitioning which is near optimal with respect to the optimum partition tree construction following the same splitting criterion: split on the median of the leaf node with the largest maximum variance query. 
Overall we construct a data structure that can be updated in $O(\polylog m)$ time.
For a (re-)partition activation over a set $\samples$ of $m$ samples, we can construct a new $\PASStree$ with the following guarantees: For COUNT/SUM queries, $\PASStree$ can be constructed in $O(k\log^d m)$ time with approximation factor $2\sqrt{k}$.
For AVG queries, $\PASStree$ can be constructed in $O(k\log^{2d} m)$ time with approximation factor $2\log^{(d+1)/2} m$.

\subsection{Re-Partitioning Triggers}
\label{sec:re-partition}
Assume that the current partitioning is $\mathcal{R}$ and let $\maxVar(\mathcal{R})$ be the (approximate) maximum variance query with respect to the current set of samples $\samples$.
\sys first checks the number of samples in each bucket (leaf node) of the current $\PASStree$. If there is a leaf node $i$ associated with partition $R_i$ such that $|\samples_i|<<\frac{1}{\alpha}\log m$ (where $\alpha$ is the sampling rate) then there are not enough samples in $u$ to make robust estimators. Hence, we need to find a new re-partitioning. Even if the number of samples in each bucket is large our system might enable a re-partitioning:
For a partition $R_i$ in the leaf node layer of $\PASStree$ let $\maxVar_i=\maxVar(R_i)$ be the (approximate) maximum variance at the moment we constructed $\PASStree$. Let $\beta>1$ be a parameter that controls the maximum allowable change on the variance. It can either be decided by the user or we can set it to $\beta=10$. Assume that an update occurred in the leaf node associated with the partition $R_i$. After the update we get $\maxVar_i'=\MAXVar(R_i)$. 
If $\frac{1}{\beta}\maxVar_i\leq \MAXVar_i'\leq \beta\MAXVar_i$ then the new maximum variance in partition $b_i$ is not very different than before so we do not trigger a re-partition. Otherwise, the maximum variance in bucket $b_i$ changed by a factor larger than $\beta$ from the initial variance $\maxVar_i$.
In this case a re-partitioning might find a new tree with smaller maximum error.
We compute a new partitioning $\mathcal{R}'$ and hence a new tree $\PASStree$.
If $\maxVar(\mathcal{R}')<\frac{1}{\beta}\maxVar(\mathcal{R})$ then we activate a re-partition restarting the catch-up phase over the new tree $\PASStree$. On the other hand, if $\maxVar(\mathcal{R}')\geq \frac{1}{\beta}\maxVar(\mathcal{R})$ then our current partitioning $\mathcal{R}$ is good enough so we can still use it.
Of course, the user can also manually trigger re-partitioning. For example, the user can choose to re-partition once every hour, day, or after $\tau$ insertions and deletions have occurred. In Appendix~\ref{appx:re-partition}, we also describe how \sys can execute either partial or full re-partitioning.

\reviewtwo{
\subsection{Discussion: Selection of Parameters and Single Synopsis}
\label{subsection:discussion}
}
\newchanges{
Given a memory constraint, our system defines the parameters $m, k$, i.e. the number of samples and the number of leaf nodes in \syn. The asymptotic space of our index, as we saw in the previous subsections, is roughly $O(m+k)$, skipping $\log m$ factors. However, $k\ll m$ so the asymptotic space of our system is $O(m)$. In particular, in our experiments we observed that choosing $k\approx \frac{0.5}{100}m$ always gives a low space and efficient data structure with low error guarantees.
Hence, using the memory constraint, the space of our index with respect to $m, k$, and the relation between $m, k$ our system can automatically set these parameters.
If the user also gives a query processing constraint, we might change the values of $m, k$ to satisfy this requirement. In particular, the query time of our index is $O(k+\min\{\log^d m, m^*\})$, where $m^*$ is the maximum number of samples in a leaf node, which is typically no more than $2\cdot m/k$.  
}

\reviewtwo{
So far, we consider that the user defines different synopses for different query templates.
We recall that for a query template the user should give three inputs: i) the type of the query SUM/COUNT/AVG/MAX/MIN, ii)the aggregation attribute $A$, and iii) the predicate attributes $c_1,\ldots, c_d$. Any combination of these inputs creates a different template.
We propose two ways to handle multiple templates. The first one has the same theoretical guarantees with respect to the maximum confidence interval as we had for a single synopsis.
In the second one, we present a simpler heuristic way to handle multiple query templates. We can still have confidence intervals but we do not have any guarantee over them.
}

\reviewtwo{
First method: Recall that a synopsis consists of the partition tree \syn and a set of at most $2m$ samples $\samples$. As we described previously the overall space for a synopsis is roughly $O(m+k)=O(m)$.
We describe a simple method having one global set $\samples$ of at most $2m$ samples and for each query template we only need to construct a different partition tree \syn. Hence, if the user is interested in $L$ different query templates the total space of our index is $O(m+L\cdot k)$.
The main idea is that we store $\samples$ only once in a dynamic range-tree or a k-d tree and construct one partition tree $\syn_h$ with $k$ leaf nodes for each query template $h\leq L$. Each node of every $\syn_h$ stores the (updated) statistics and the associated (hyper-)rectangle of the corresponding $h$-th partitioning. The only difference now is that we do not store the samples in the leaf nodes of each tree. Instead, whenever we need access to the samples in a leaf node $u$, we run a reporting query with the corresponding hyper-rectangle $R_u$ in the range tree (or k-d tree). In this way, using only $O(m+L\cdot k)$ space, we have theoretical error guarantees for every different query that belongs in one of the supported query template. The update time increases by a factor of $L$, however, all update operations in a tree are extremely fast and they can be executed in parallel for different trees.
Furthermore, we note that ideally in our system we would like to know all query templates upfront so that the corresponding tree is ``ready'' when we get a query from a supported template. However, even if the templates are not known upfront, when we see a query from a new template we can construct a new partition tree using our efficient partitioning algorithms in roughly $O(k\cdot\polylog m)$ time. Then we start the catch-up phase only for this tree and start answering queries from this new template with error guarantees.
Finally, we notice that the user can give an overall memory constraint, and not one memory constraint for each different synopsis. Using the overall upper bound, our system can automatically decide and update the number of samples we store or the size of the partition trees that we maintain in order to satisfy the memory constraint.
}

\reviewtwo{
Second method: There is also a simpler, heuristic way to handle queries from different templates using only one partition tree \syn. Imagine that we construct a partition tree using the SUM aggregation function over the aggregation attribute $A$ and predicate attributes $c_1, \ldots, c_d$. If the user asks a query with a different type of aggregation function, say AVG or COUNT, we can still use the same tree to answer the query as long as we store and maintain the sum and the count in each node of the tree. If the user asks a query with a different aggregation attribute $B$ instead of $A$ there are two ways to handle it: i) We store and maintain statistics in \syn for all attributes including $B$ (we always assume constant number of attributes). Hence, we can still answer a query using the partition tree straightforwardly. ii) If we do not store and maintain statistics for all attributes, then we can perform stratified sampling using the strata and the samples in the leaf nodes of the \syn.
Finally, assume that the system encounters a new query having different predicate attributes. There are a few ways to handle it: i) Assume that the dynamic partition tree \syn we constructed at the beginning was constructed over all possible predicate attributes of the tuples. Recall that throughout our submission we always assume a constant number of attributes. We can always query such tree with any query range over any subset of predicate attributes and return an estimation with confidence intervals. The tree is not optimized strictly with respect to the predicate attributes of the query (as in the first method) however we expect the error to be low. ii) Simply apply uniform sampling using the samples $\samples$, or iii) re-partition the tree for the new predicate attributes.
}

\section{Experiments}
\label{sec:experiments}
We run our experiments on a Linux machine  with an Intel Core i7-8700 3.2GHz CPU and 16GB RAM. 

\subsection{Setup}
We generate query workloads of 2000 queries by uniformly sampling from rectangular range queries over the predicates. 
We then initialize a \sys instance with a user-specified sample rate, a catch-up ratio and a number of leaf nodes of the partition tree to compare with other baselines (these parameters directly control the Throughput, Query Latency, and Storage Size).

\subsubsection{Datasets}\sloppy
\textsf{Intel Wireless dataset}. The Intel Wireless dataset \cite{intelwireless_} contains 3 million rows of sensor data collected in the Berkeley Research lab in 2004. Each row contains measurements like humidity, temperature, light, voltage as well as the date and time each record was collected. 

\noindent \textsf{New York Taxi Records dataset}. The New York City Taxi Trip Records dataset \cite{nyctaxi} contains 7.7 million rows of yellow and green taxi trip records collected in January 2019. Each record contains information about the trip including
pickUpDateTime, dropOffDateTime, tripDistance, dropOffLocation, passengerCount, etc.

\noindent \textsf{NASDAQ ETF Prices dataset}. The NASDAQ Exchange Traded Fund (ETF) Prices dataset\cite{nasdaq} contains 2166 ETFs traded in the NASDAQ exchange from April 1986 to April 2020. There are 4 million entries in the dataset and each entry contains the date, the volume of transactions of an ETF on the date, and 4 prices: the price of an ETF when the market opens and closes; the highest and the lowest of its daily price range.

\subsubsection{Metrics and ground truth}
We report the wall-clock latency and the throughput, i.e. number of requests (query/data) processed per second. 
To measure the accuracy of the system,  we report the 95 percentile of the relative error which is the difference between ground truth and estimated query result divided by the ground truth. 
We define the ground truth to be w.r.t all the tuples available when the query arrives, i.e. the true results reflect all insertions and deletions up to its arrival point.
To make sure our experiments are deterministic, we fix this sequence up-front and ensure they are the same for each baseline.

\subsubsection{Baselines} 
All of these baselines are tuned to roughly control for query latency.

\vspace{0.5em}

\noindent \textbf{Reservoir Sampling (RS) and Stratified Reservoir Sampling (SRS).} We construct a uniform sample of the entire data set which is maintained using the reservoir sampling algorithm~\cite{vitter1985random}. We use a variant of RS first designed for the AQUA system that handles both insertions and deletions~\cite{gibbons2002fast}\footnote{Due to its age, a direct comparison with AQUA was not feasible}. Unless otherwise noted, we use a 1\% sample of data. For stratified seservoir sampling, the strata is constructed using a equal-depth partitioning algorithm.

\noindent \textbf{DeepDB. }  We also compare with a machine learning-based baseline called \textsf{DeepDB}\cite{hilprecht2019deepdb}. DeepDB achieves state-of-the-art AQP results in the static setting, and we chose it as a baseline since it has limited support for dynamic data. In our baseline, DeepDB trains on 10\% of the data. We set this to be equivalent to the ``catch-up'' sampling in \syn.

\noindent \textbf{Dynamic Partition Tree-Only (\syn). } We compare with a baseline of only using a single \syn synopsis without online optimization. This synopsis is constructed once and then used for the duration of the experiment. Unless otherwise noted there are 128 leaf nodes in a balanced binary tree, the leaf nodes are associated with 1\% samples of their respective strata, and the catch-up sampling rate 10\% of the data.

\noindent \textbf{\sys. } Finally, we evaluate the full-featured \sys system. This includes a \syn and also performs re-partitioning if needed. Unless otherwise noted there are 128 leaf nodes in a balanced binary tree, the leaf nodes are associated with 1\% samples of their respective strata, and the catch-up sampling rate 10\% of the data.

\vspace{0.5em}

The storage costs of the baselines on the NYC Taxi dataset given the typical setting (128 leaf-nodes, 10\% catch-up rate, and 1\% sample rate) are: reservoir sampling baseline takes about 5MB, \sys and \syn takes about 6MB, a DeepDB baseline trained with 10\% of the data is about 60MB.

\subsection{Accuracy}
\label{sec:end-to-end}


\begin{table*}[ht]
    \caption{Median relative error (\%) of 2000 SUM random queries and average query latency (ms/query) over three datasets.}
    \label{tbl:macro-acc-latency-vs-deep-rs}
    \centering
    
    \begin{tabular}{|c|}
        \hline
        \multicolumn{1}{|c|}{---}\\
        \hline
        Approach \\\hline
        \sys   \\ 
        DeepDB   \\ 
        RS    \\
        SRS    \\\hline
    \end{tabular}
    \begin{tabular}{|c|c|c|}
        \hline
        \multicolumn{3}{|c|}{Intel (\%)}\\
        \hline
        0.2 & 0.5 & 0.9\\\hline
        0.67  & 0.62 & 0.33 \\        
        1.5 & 1.7 & 0.8\\
        2.1 & 1.6 & 1.3  \\
        1.3 & 1.3 & 1.2  \\\hline

    \end{tabular}
    \begin{tabular}{|c|c|c|}
        \hline
        \multicolumn{3}{|c|}{NYC (\%)}\\
        \hline
        0.2 & 0.5 & 0.9\\\hline
        0.48  & 0.22 & 0.2 \\        
        4.7 & 4.7 & 4.7\\
        3.4 & 2.1 & 0.94  \\
        2.4 & 1.2 & 0.95  \\\hline

    \end{tabular}
    \begin{tabular}{|c|c|c|}
        \hline
        \multicolumn{3}{|c|}{ETF (\%)}\\
        \hline
        0.2 & 0.5 & 0.9\\\hline
        5  & 4.3 & 2.3 \\   
        -  & - & - \\   
        16  & 9.8 & 8.6 \\   
        10  & 8.2 & 8 \\\hline

    \end{tabular}
    \begin{tabular}{|c|c|c|}
        \hline
        \multicolumn{3}{|c|}{Intel (ms/query)}\\
        \hline
        0.2 & 0.5 & 0.9\\\hline
        0.19  & 0.31 & 0.63 \\        
        0.6 & 0.6 & 0.6\\
        2.5 & 6.3 & 13.2\\
        3.1 & 6 & 10.7  \\\hline

    \end{tabular}
    \begin{tabular}{|c|c|c|}
        \hline
        \multicolumn{3}{|c|}{NYC (ms/query)}\\
        \hline
        0.2 & 0.5 & 0.9\\\hline
        0.27  & 0.57 & 0.97 \\        
        0.6 & 0.6 & 0.6\\ 
        4.7 & 14.2 & 30.6\\
        4.6 & 14.7 & 25.3  \\\hline

    \end{tabular}
    \begin{tabular}{|c|c|c|}
        \hline
        \multicolumn{3}{|c|}{ETF (ms/query)}\\
        \hline
        0.2 & 0.5 & 0.9\\\hline
        0.14  & 0.28 & 0.46 \\        
        0.6  & 0.6 & 0.6 \\   
        2.58  & 6.8 & 13 \\   
        2.66 & 5.2 & 12.7  \\\hline

    \end{tabular}
    

\end{table*}

We first evaluate the end-to-end performance of \sys and the baselines on a 1d problem (1 predicate attribute).
For the NYC Taxi dataset, we use the \texttt{pickUpTime} attribute as the predicate attribute and the \texttt{tripDistance} attribute as the aggregate attribute; for the ETF dataset, we use the \texttt{volume} attribute as the predicate attribute and the \texttt{close} attribute as the aggregate attribute; for the Intel Wireless dataset, we use the \texttt{time} and \texttt{light} attributes as predicate and aggregate attribute respectively.

We start with 10\% of the data in Kafka which is used by the baselines for initialization (simulating historical data). We incrementally add 10\% more data in increments (simulating new data arrival).
After every 10\% increment, we re-train the model for DeepDB and re-initialize the DPT used by \sys. 
We report results when 20\%, 50\%, and 90\% of the rows from each dataset are inserted into the system. The median relative error and the corresponding average query latency can be found in Table \ref{tbl:macro-acc-latency-vs-deep-rs}.

We can see that \sys has the overall best accuracy while controlling for query latency.
We note that the accuracy of DeepDB is stable as a function of progress. This is because as a learned model DeepDB has a roughly fixed resolution of the data (it does not increase the number of parameters as more data is inserted).\footnote{We omit the results of DeepDB on the ETF dataset in Table \ref{tbl:macro-acc-latency-vs-deep-rs} due to very large error ($>$ 1000\%) for SUM queries while the error of COUNT queries is reasonable.} These findings are consistent with results from \cite{liang2021combining}. The accuracy of RS and SRS improves at a cost of a higher query latency. 

\subsection{Online Performance}
 We populate Kafka with the first $p$ percent of the NYC Taxi dataset ($p$ varies from 10 to 90). Like before, we initialize \sys on the first 10\% of data and then incrementally add increments of 10\% more. In this experiment, we construct a mixed update workload of both insertions and deletions.
On the left plot of Figure \ref{fig:thruputopt}, we show the throughput of handling insertions and deletions using a pool of 12 threads. We can see the performance of \sys is quite stable and does not change with the size of existing data or the amount of data that have been processed. For each insertion and deletion, we simply find the target node in $O(\log(k))$ and modify the summary. Even though a larger reservoir size increases the overhead of manipulating the samples for reservoir sampling, the increased overhead is unnoticeable. This is because the stratum stored in each node is $\frac{1}{k}$ of the reservoir, and each stratum is independent with others and race condition only happens if two workers are working on the same node. 

On the right plot of Figure \ref{fig:thruputopt}, we show the re-optimization time cost in seconds by \sys and DeepDB. The cost to initialize \sys increases with the number of tuples stored in Kafka but it is still much cheaper than DeepDB. It is worth noting that the re-optimization cost of DeepDB is the cost of re-training instead of incremental training. This is mostly due to the constraint of the API exposed by DeepDB, and we observe that re-train a model with 2$n$ samples is faster than train a model with $n$ samples then incrementally train another $n$ samples.
The results suggest that complex, learned synopses are not ideal in the dynamic setting.

\begin{figure}[t]
  \centering
  \includegraphics[width=0.8\linewidth]{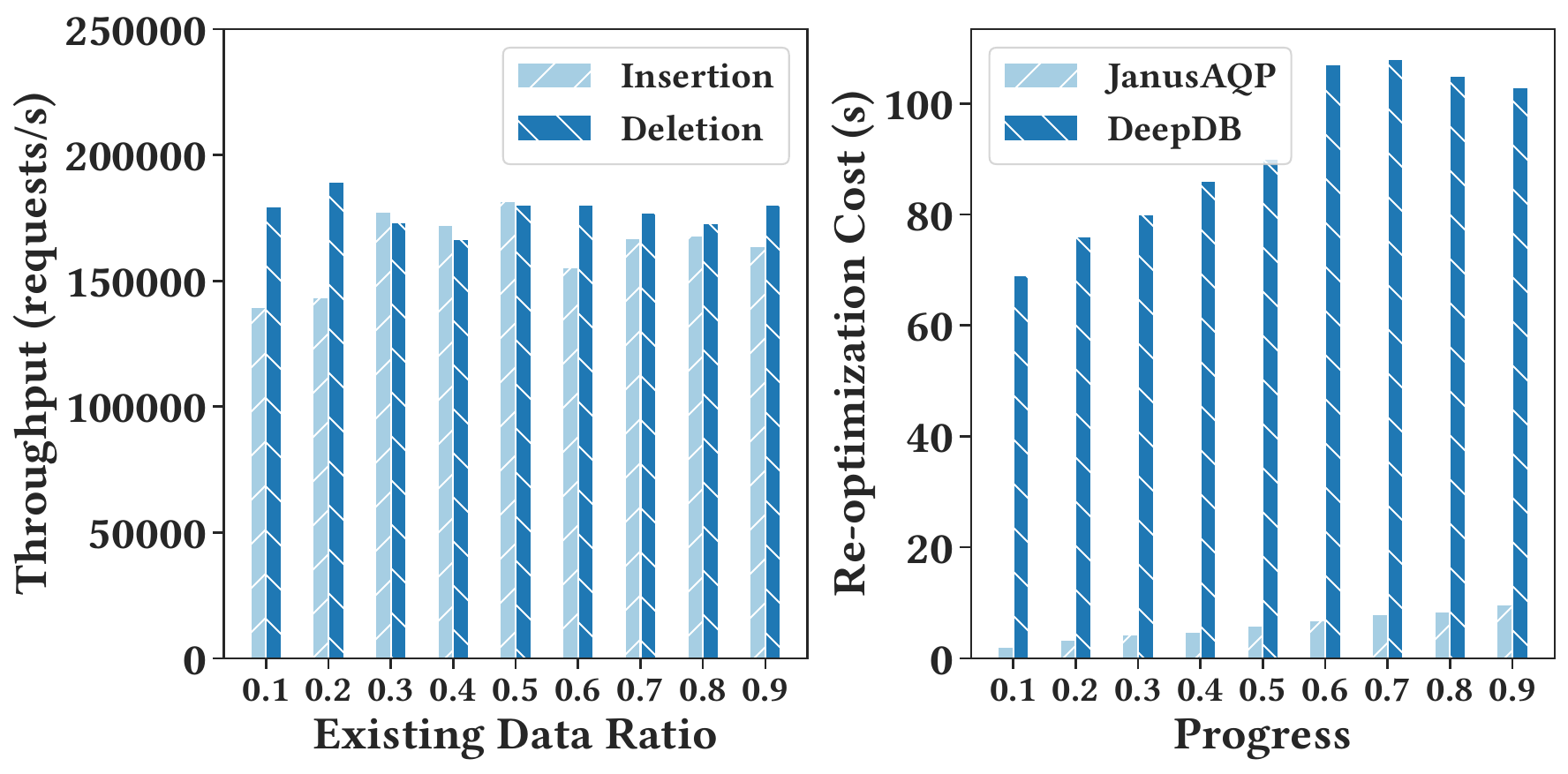}
  \caption{We evaluate the throughput of \sys when handling insertions and deletions in multi-threaded mode. We also compare the re-optimization cost with DeepDB.}
  \label{fig:thruputopt}
\end{figure}

\subsection{Handling Deletion}
We construct a \sys instance with the 50\% percent of each dataset, then we delete the last $p$\% of data of the first 50\% ($p$ varies from 1\% to 9\%). After \sys process all the deletions, a query workload of 2000 random queries is evaluated and we record the median relative error of the 2000 queries. We use the data that remains in the system to compute the ground truth, e.g., for $p=1\%$, the ground truth is computed with the first 49\% of each dataset. 

Results can be found in Figure \ref{fig:del3in1}, we notice that the relative error is relatively stable when we vary the deletion percentage. This is because the tuples that are being deleted are uniformly distributed over the predicate attributes of the query workloads, i.e. the deletion would occur in each leaf node of the DPT with roughly the same probability, therefore, the DPT without re-optimization works reasonably well.
In another experiment we artificially generate deletions that are skewed to demonstrate scenario where re-optimization is needed, details can be found in Sec. \ref{sec:exp-del}.

\begin{figure}[t]
  \centering
  \includegraphics[scale=0.22]{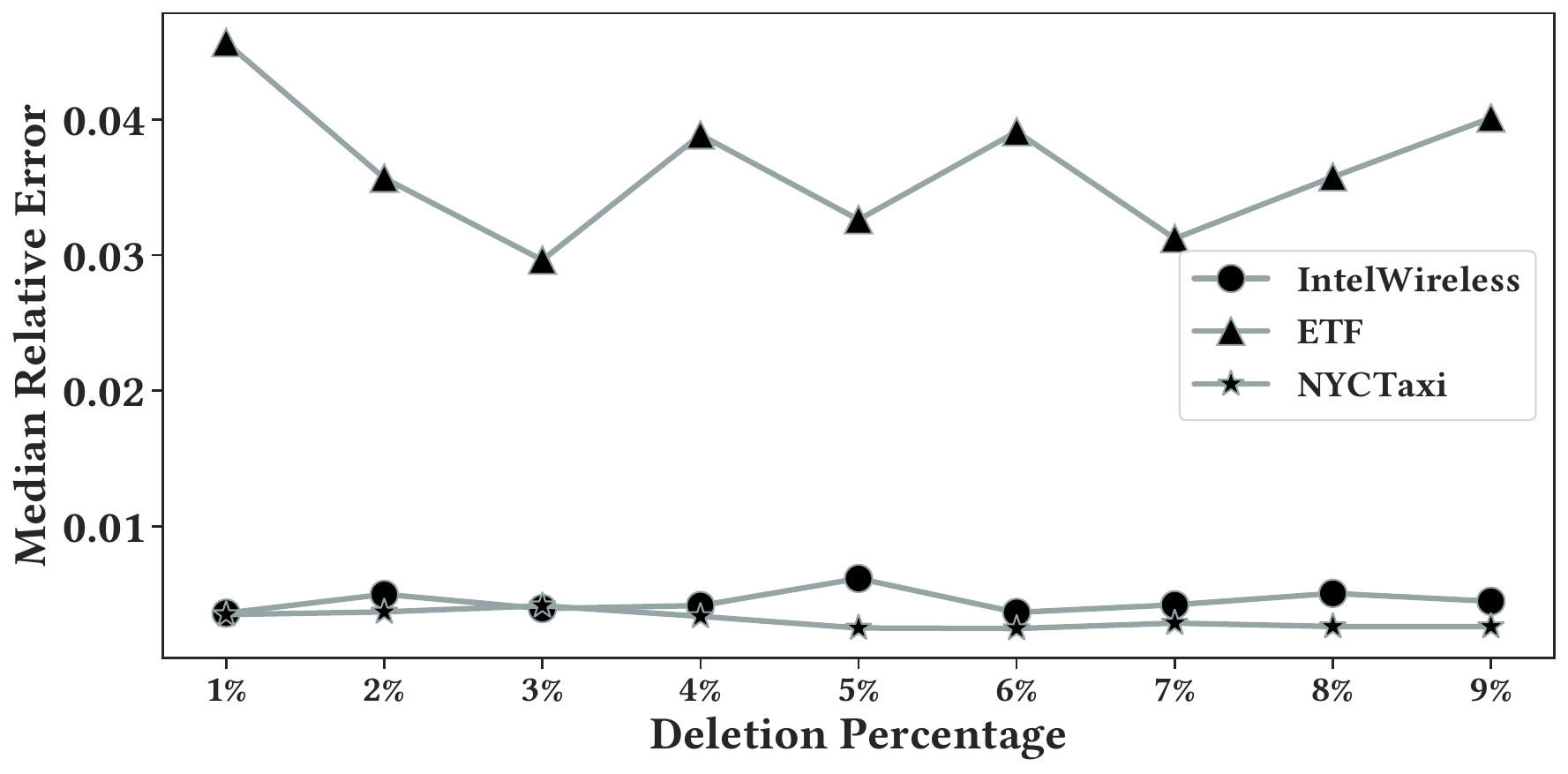}
  \caption{Median relative error of \sys varying the amount of deletions from 1\% to 9\% over three datasets.}
  \label{fig:del3in1}
\end{figure}

\subsection{The Catch-up Phase}
In this experiment, we study how the catch-up phase can impact the accuracy and performance of the entire system.

\subsubsection{Accuracy}
We use the entire Intel wireless dataset as the existing data. We compare a set of \sys(128, $c$, 1\%) instances where the catch-up goal $c$ varies from 1\% to 10\% with a step of 1\%. When each \sys instance reaches the catch-up goal, we use it to evaluate the same set of 2000 random queries generated using the light attribute as the aggregate attribute and the time attribute as the predicate attribute.

The results can be found in the left plot of Figure \ref{fig:intel-catchup-ratio-vs-nx}. As a reference, we also show the accuracy of an RS baseline with 1\% sample rate. We notice that \sys(128,1\%,1\%) has no advantage against the RS baseline because neither the samples nor the summaries built during catch-up could provide better accuracy. As we increase the catch-up ratio, we can see an improvement in accuracy because the quality of the summaries built by the catch-up phase improved. Comparing with the expensive offline pre-processing used in \cite{liang2021combining}, we believe the catch-up phase is a better alternative that provides another knob to tune the tradeoff between accuracy and cost.

\begin{figure}[t]
  \centering
  \includegraphics[width=0.8\linewidth]{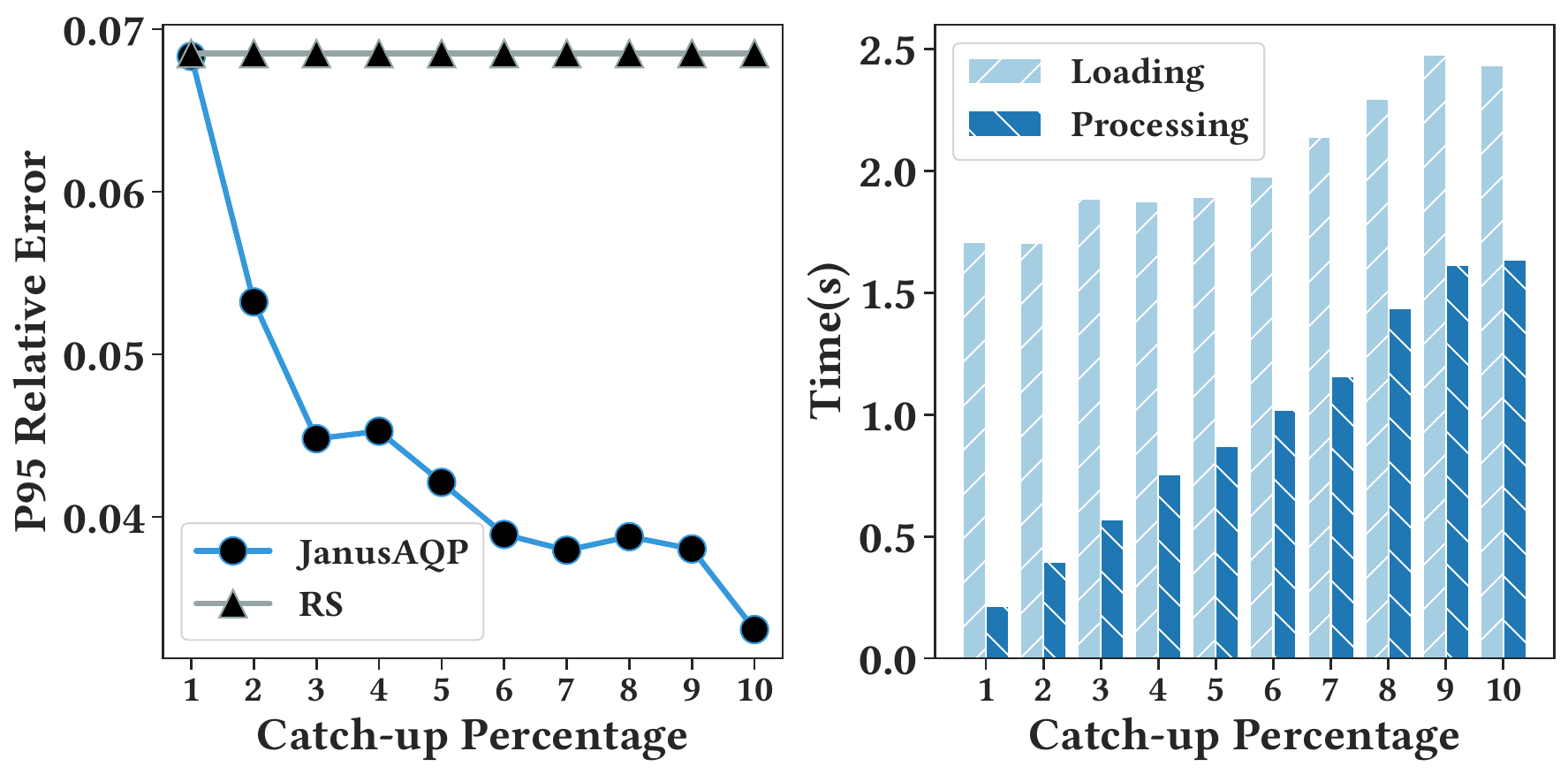}
  \caption{Varying the catch-up goal from 1\% to 10\% of the data, we evaluate the accuracy of \sys (left plot) and the time cost of the catch-up phase (right plot).}
  \label{fig:intel-catchup-ratio-vs-nx}\vspace{-1em}
\end{figure}

\subsubsection{Overhead}
The overhead of the catch-up phase comes from two sources: the loading and processing of the samples. 
We distinguish and measure the two types of overhead in terms of their time cost. Data loading time measures the time spent on calling the Kafka \code{poll()} API, transferring the data, and ETL operations that are necessary to prepare the data for \sys to process. It is worth noting that the data loading cost is part of the essential cost that occurs in all systems and is usually less relevant to the core design of the system but more relevant to the design of interfaces. For example, with a different interface, instead of dealing with the strings from Kafka that can be expensive to parse, the system could use Protocol Buffers\cite{protobuff} for more efficient data exchanging or even offload some of the ETL duties to the client-side as described in \cite{ding2021ciao}. On the other hand, the data processing time stands for the time taken by \sys to analyze the data then accordingly modify internal data structures that will be used for query processing.  

Results can be found in the right plot of Figure \ref{fig:intel-catchup-ratio-vs-nx}. We can see that the data processing with a single thread takes less than 1.5 seconds for a catch-up ratio of 10\%, which is equivalent to a throughput of processing 160,000 tuples per second. Furthermore, the data loading cost is much higher than the data processing cost and we believe the data loading cost can be further improved by more engineering efforts and techniques such as client-assisted data loading\cite{ding2021ciao}.

\reviewtwo{
\subsection{Dynamic Query Templates}
\label{subsection:robustness}
}

\reviewtwo{
In this section, using the NYC Taxi dataset, we conduct experiments to demonstrate the robustness of \sys. In particular, we implement the heuristic way of \sys to handle new query templates, as described in Subsection~\ref{subsection:discussion}. We explore the error considering all three cases: Different predicate attributes, different aggregation attribute, and different aggregation function.
}

\reviewtwo{
First, we consider the case where \sys is constructed using one predicate attribute but the query uses another attribute as predicate attribute. In Subsection~\ref{subsection:discussion} we proposed three ways to handle it. Here we implement the simpler way where we apply uniform sampling using the samples $\samples$ which leads to higher query latency similar to what we observed for RS in Table \ref{tbl:macro-acc-latency-vs-deep-rs}.  The results can be found in the left plot of Figure \ref{fig:p95-3in1}: the PickupOverPickup baseline is the case where the query and \sys both use the PickupTime attribute as predicate attribute; the DropoffOverDropoff baseline is the case where the query and \sys both use the DropoffTime attribute as predicate attribute; lastly, the DropoffOverPickup baseline is the case where the query use the DropoffTime attribute as predicate attribute while \sys is constructed using PickupTime as predicate attribute. We observe that when  the two attributes differs (DropoffOVerPickup) we have the highest error (but still it happens to be quite competitive). After a re-partitioning with respect to the new predicate attribute the high accuracy can be restored (DropoffOverDropoff). Even though we drew three curves over the entire progress, we assume that the system first considers PickupOverPickup between $[0.1-0.3)$ (cycles), then DropoffOverPickup between $[0.3-0.6)$ (stars), and finally DropoffOverDropoff between $[0.7-0.9]$ (triangles). We present all values over the three curves to show all the details. This is also the case in all plots of Figure~\ref{subsection:discussion}.
}


\reviewtwo{
Second, we consider the case where \sys is constructed using one aggregation attribute but the query uses a different aggregation attribute. In Section~\ref{sec:partAlg}, we described how the aggregation attribute is used by our partitioning algorithm while constructing \sys. In this experiment, we show how the change of aggregation attribute affects the performance of \sys. The result can be found in the middle plot of Figure \ref{fig:p95-3in1} where we show two baselines indicating whether the aggregation attribute used by \sys is the same with the aggregate attribute of the query. We can see the accuracy of the two baselines are quite close, which suggests that \sys is quite robust over the change of the aggregation attribute.
}


\reviewtwo{
Finally, we study how \sys performs over different aggregation functions. As we described earlier, SUM, CNT and AVG are all part of the statistics we keep and maintain in the partition tree therefore no additional effort is required to take care of a change in aggregation function. Results in the right plot of \ref{fig:p95-3in1} suggest all the three aggregate functions can achieve good accuracy, which also implies other aggregate functions such as STDDEV that can be composed using SUM and CNT would also perform well.
}

\reviewtwo{Overall, we show that our system can support different query templates with low error even with the heuristic way that stores and maintains only one partition tree.}

\begin{figure}[t]
  \centering
  \includegraphics[width=1.0\linewidth]{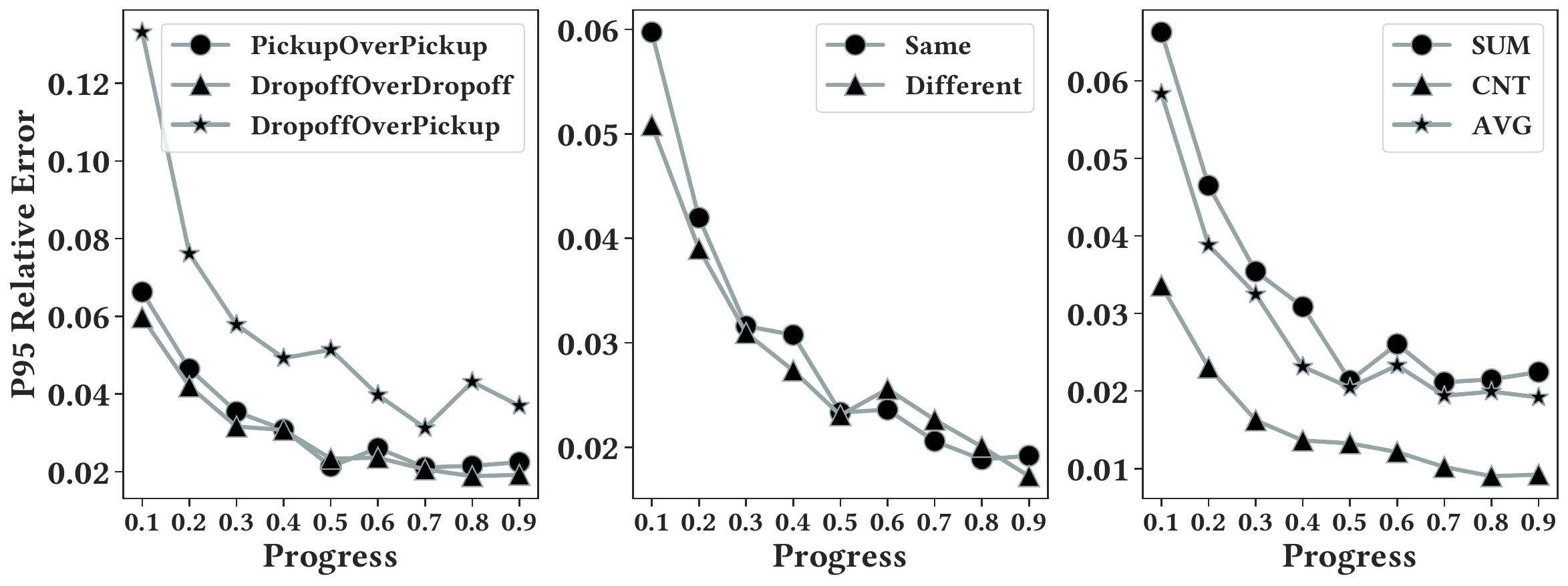}
  \caption{\reviewtwo{We evaluate the robustness of \sys over three scenarios where the predicate attribute, the aggregate attribute, and the aggregate function used by the query workload might change from the ones that are used to construct \sys.}}
  \label{fig:p95-3in1}\vspace{-1em}
\end{figure}

\subsection{Multi-dimensional Query Templates}
Next, we investigate the performance of \sys with multi-dimensional queries on the NASDAQ ETF Prices dataset. We randomly generate 2000 queries from a 5-D query template that uses the volume attribute as the target attribute, the date attribute and the 4 price attributes as predicate attributes. We perform the same workflow as we did in Section \ref{sec:end-to-end}. We first compare the median relative error of \sys(256,10\%,1\%) with DeepDB and the results can be found in the left plot of Figure \ref{fig:etf-kdtree}. We notice that the accuracy of \sys is better than DeepDB but the relative error increases for both. This is because multi-dimensional queries are usually more selective. Also, because the queries are generated using the entire dataset, we notice that many of the ground truths generated using the first 20\% of the data are 0s. Therefore, in the experiment, we start with 30\% of the data. 
On the right plot of Figure \ref{fig:etf-kdtree}, we can find the re-optimization cost of \sys is lower than DeepDB but is more expensive than in the 1D setting. While the increase of dimensions can indeed make it more expensive to process the samples we fetched during catch-up, we believe the re-optimization cost can be further improved with more engineering efforts. 

\begin{figure}[t]
  \centering
  \includegraphics[width=0.8\linewidth]{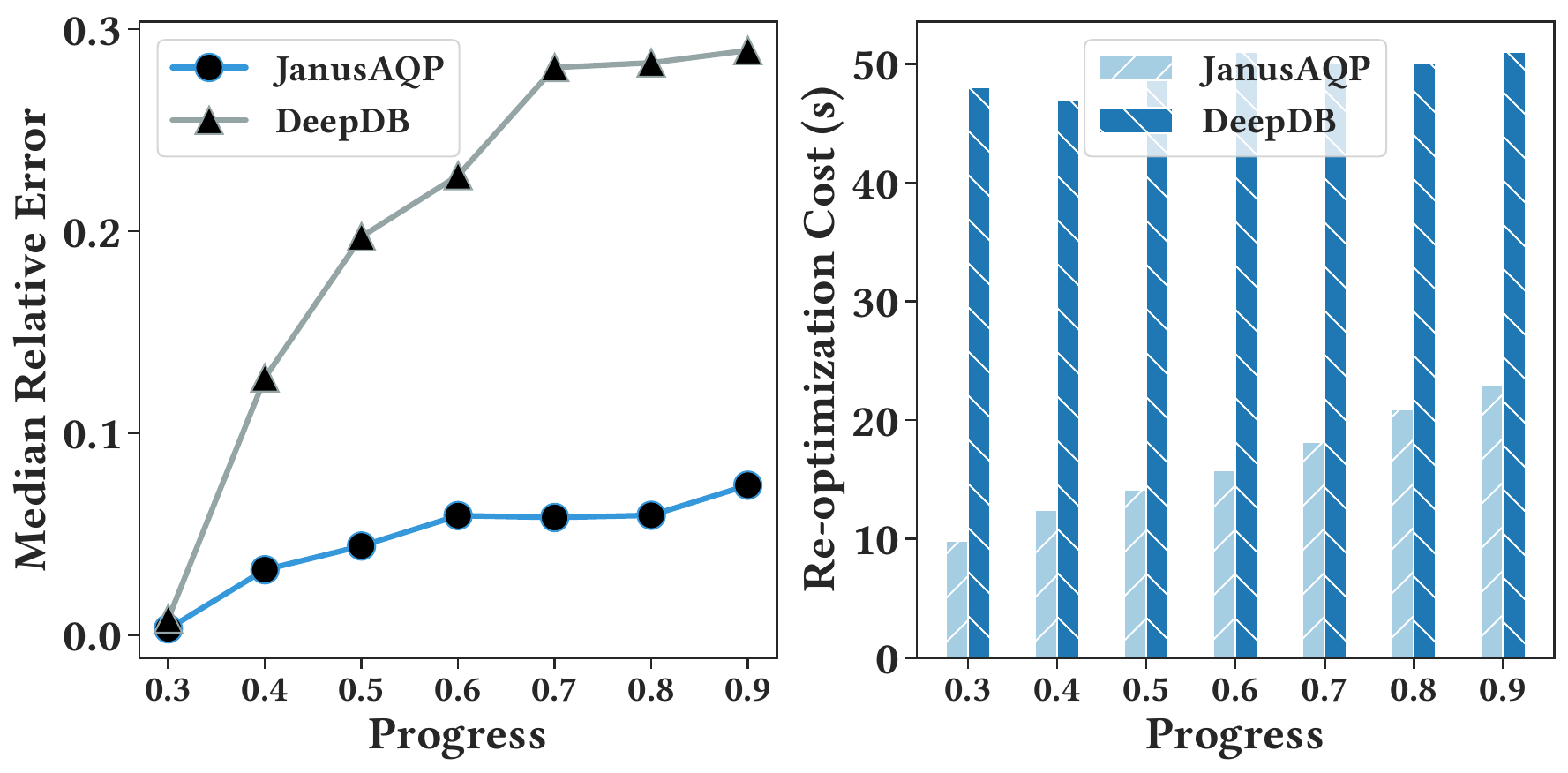}
  \caption{We compare the median relative error and the re-optimization cost of \sys with DeepDB on multi-dimensional queries. 
  }
  \label{fig:etf-kdtree}
\end{figure}

\subsection{Re-partitioning}
\label{sec:exp-del}
In the first experiment, using the NYC Taxi dataset, \sys performs a periodic re-partitioning after every 10\% insertions. For comparison, the \syn baseline does not perform any re-partitioning and we evaluate the accuracy. 
We deliberately skew the insertions by sorting on \texttt{pickUpDateTime} so that new insertions would hit a small number of partitions.
The results are illustrated in Figure \ref{fig:repartition}(left), we can see the relative error of \syn increases drastically due to a partition tree that becomes more and more imbalanced with new insertions. With periodic re-partition, \sys keeps the accuracy at a controlled level.

\begin{figure}[t]
  \centering
  \includegraphics[width=0.8\linewidth]{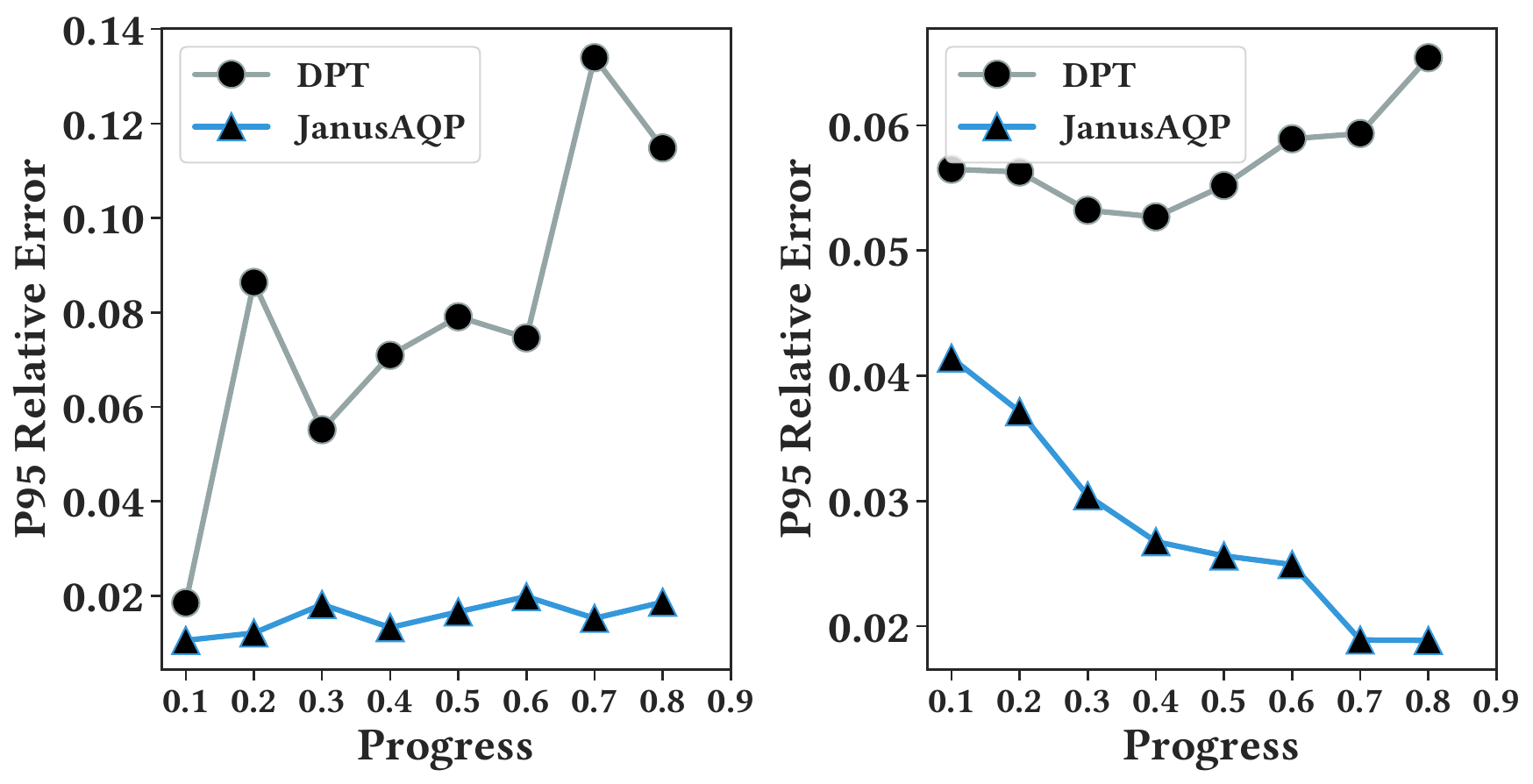}
  \caption{We compare the accuracy of \sys and \syn in two scenarios that cause imbalanced partition trees.}
  \label{fig:repartition}
\end{figure}

In the second experiment, we use the \texttt{pickupTimeOfDay} as the predicate attribute. Because the dataset is randomly distributed
over the \texttt{pickupTimeOfDay} attribute, the insertions are not skewed as in the previous setting. 
To demonstrate a situation where a re-partition is triggered by deletions, we randomly choose 10\% of the nodes and we randomly delete half of the samples that belong to these nodes then we insert the next 10\% data. After the insertion, the re-partition will be triggered for \sys. For comparison, we use a \syn baseline that does not perform any re-partition. We perform the same operations to the leaf nodes of the \syn baseline then we evaluate the same set of queries.
The results can be found in the right plot of Figure \ref{fig:repartition}, we can see the relative error of \syn increases due to the imbalanced partition tree while the error of \sys drops because of re-partition.



\subsection{A More Efficient Partitioning Algorithm}

In Section \ref{sec:partAlg}, we propose a binary search-based (BS-based) partitioning algorithm for 1 dimension that is much more efficient. In this experiment, we compare the accuracy and time cost of the BS-based algorithm with the dynamic programming-based partitioning algorithm used by PASS on the Intel Wireless dataset. We implement the BS-based algorithm in Python in our code base of PASS for a fair comparison. We measure the time cost in seconds of each partitioning algorithm given different number of partitions, we also compare the median relative error of the PASS variation over 2000 randomly generate queries.

\begin{table}[t]
\caption{We compare our new binary search-based (BS) partitioning algorithm with the dynamic programming-based (DP) algorithm proposed in~\cite{liang2021combining} on the Intel dataset.}
\label{tbl:partition}
\footnotesize
\centering
\begin{tabular}{|cl|l|l|l|l|}
\hline
\multicolumn{1}{|l}{} &      & 16    & 32    & 64     & 128    \\ \hline
\multirow{2}{*}{Partition Time (s)}       & DP & 16    & 22    & 382    & 6349   \\
                                          & BS  & 0.3   & 0.3   & 0.4    & 1.6    \\
\multirow{2}{*}{Median RE (CNT)}        & DP & 0.2\% & 0.1\% & 0.05\% & 0.04\% \\
                                          & BS  & 0.6\% & 0.4\% & 0.1\%  & 0.1\%  \\
\multirow{2}{*}{Median RE (SUM)}          & DP & 0.2\% & 0.1\% & 0.07\% & 0.05\% \\
                                          & BS  & 1\%   & 0.9\% & 0.2\%  & 0.2\%  \\
\multirow{2}{*}{Median RE (AVG)}          & DP & 0.2\% & 0.1\% & 0.08\% & 0.05\% \\
                                          & BS  & 1\%   & 0.7\% & 0.2\%  & 0.15\% \\ \hline
\end{tabular}
\end{table}
The result can be found in Table \ref{tbl:partition}. We vary the number of partitions from 16 to 128, as we increase the number of partitions, the sample size used by the algorithms also increase. We notice that the time cost of the DP-based algorithm increase drastically with the number of partitions while the time cost of the BS-based algorithm increase slightly.
On the accuracy side, the DP-based algorithm does lead to a lower error but the BS-based algorithm also introduce good accuracy. Overall, we believe the BS-based algorithm is more scalable than the DP-based algorithm and it provides favorable trade-off between cost and accuracy.

\bibliographystyle{abbrv}

\bibliography{ref}

\newpage
\appendix
\section{Sampling from Kafka-like Systems}
\label{sec:kafka-sampler}
Random sampling is a key building block of \sys because it is used in many components: we use random samples to build the leaf layer of the partition tree and initialize the reservoir that are later used to solve queries. During the catch-up phase, we also use random samples to construct the summaries that are stored in the leaf nodes. Random sampling affects both the accuracy and performance of the entire \sys system: a biased sample hurts the accuracy of the system and expensive sampling operations leads to a higher latency and lower throughput.

Designing a random sampler for message brokers like Kafka can be a non-trivial task because the API of such systems usually does not provide random access to the data. To retrieve data from a Kafka topic, a Kafka consumer has to send an offset to the server indicating the location it wants to access data from. Therefore, a naive random sampler can be implemented by using the \code{poll()} API with a random offset. 

However, such a naive implementation can be expensive because each poll could retrieve a batch of thousands of tuples that are contiguous (therefore biased). To guarantee unbiased sampling, we will have to keep only a small portion from each batch and discard most of the tuples, set another random offset, repeat the process until we collected enough samples. 

To build a scalable, efficient and unbiased random sampler for Kafka, we need more control over the polling process besides the offset, to be more specific, we want to control the size of each poll. We propose two sampling methods:

\textbf{Sequential Sampler} A sequential sampler retrieves the data in a sequential manner. In each poll, a random sample is drawn from the batch and the rest tuples are discard. Sequential samplers work the best when the size of the dataset is medium to large because the entire dataset is transferred from Kafka to the sampler. Therefore, one overhead of this approach is the network traffic of transferring the entire dataset, another drawback of a sequential sampler is that the random samples are only available until the sampler have retrieved the entire dataset, the high latency might not be acceptable in some situations.

\textbf{Singleton Sampler} In each poll, a singleton sampler request one tuple from a random offset, it repeats until enough sample has been collected. Singleton samplers minimize the network traffic with the cost of server-side overhead due to a more frequent use of the API. The main advantage of a singleton sampler is that it offers lower latency because the random sample is built incrementally, a small random sample can be available with a lower latency. 

In general, we observe that, for light-weight sampling tasks or tasks that requires low latency, a singleton sampler works the best. In scenarios where the dataset size is medium to large and the expected latency is acceptable, the sequential sampler might be preferred for a more consistent performance. There might be an interesting mid-ground for sampling from Kafka-like systems that can offer better efficiency and latency and we plan to investigate in a future work.

\mparagraph{Experiments}
\label{sec:sampler-perf}
In this experiment, we study the performance of the two sampling strategies discussed in Section \ref{sec:kafka-sampler} with the Intel wireless dataset. We implement samplers with different poll size and measure the time cost of collecting 1 million tuples from Kafka. Essentially, this experiment measures the overhead introduced by transferring the data and calling Kafka API.

Results can be found in Table \ref{tbl:sampler-perf} where the Singleton sampler is represented by the row with \code{pollSize} equals to 1. Rows with \code{pollSize} larger than 1 are sequential samplers which retrieve the entire dataset and sample from each poll.

\begin{table}[h]
\begin{tabular}{ccccc}
pollSize & nPolls & total(ms) & ms/poll & \multicolumn{1}{l}{EquivSingletonSR} \\ \hline
1 & 1000000 & 19000 & 0.019 & --- \\ \hline
10 & 10000 & 4000 & 0.04 & 0.21 \\ \hline
100 & 10000 & 2000 & 0.02 & 0.11 \\ \hline
1000 & 1000 & 1500 & 1.5 & 0.08 \\ \hline
10000 & 100 & 1400 & 14 & 0.075 \\ \hline
100000 & 10 & 1700 & 17 & 0.09 \\ \hline
\end{tabular}
\caption{We use a singleton sampler (pollSize$=$1) and sequential samplers (pollSize$>$1) to sample 1 million tuples, given the latency of each sequential samplers, we derive a equivalent sample rate of the singleton sampler indicate a sample rate above which sequential samplers take less time to collect all requested random samples. If the expected total latency is acceptable, sequential samplers might be preferred in these scenarios.}
\end{table}
\label{tbl:sampler-perf}

Because we have to wait a \code{total(ms)} time for sequential samplers to collect the samples. We calculate the applicable \code{sampleRate} for each sequential sampler above which they can achieve a lower \code{total(ms)}.
For example, if our sample rate is 10\% which is larger than 7.5\% achieved by the best Sequential sampler with pollSize of 10000, the singleton sampler will take more time than the sequential sampler to complete the sampling process. And the sequential samplers might be preferred if the estimated latency of \code{total(ms)} is acceptable.

In \sys, because the sample rate we use during initialization is no larger than 1\%, we always use a singleton sampler during initialization, i.e. to collect sample to build the partition tree and to initialized the strata. For the catch-up phase, if our catch-up rate is larger than 10\%, when dealing with a dataset of medium to large size with an acceptable latency, we will prefer to use a sequential sampler because we can have a complete catch-up with a better accuracy, otherwise, a singleton sampler is preferred for the low latency it offers and we will keep it running in background.

\section{Sufficiently Large Strata for Proportional Allocation}
\label{appndx:proportional}
Let $\alpha=m/N$ be the current sampling rate.
We argue that if each stratum has size at least $N_i\geq \frac{16}{\alpha}\log k$, then all of them satisfy the proportional allocation requirements up to a factor of $2$ with probability at least $1-1/k$.

Let $X_i$ be the number of samples in the $i$-th stratum with $N_i\geq \frac{16}{\alpha}\log k$. We have $E[X_i]=\alpha N_i$.
From Chernoff bound\footnote{We use the multiplicative Chernoff bound: $Pr[X\leq (1-\delta)\mu]\leq e^{-\delta^2\mu/2}$, for $\delta\geq 0$, and $\mu=E[X]$.} we have
$$Pr\left[X_i\leq \frac{1}{2}\alpha N_i\right]\leq \frac{1}{e^{\alpha N_i/8}}\leq \frac{1}{e^{2\log k}}=\frac{1}{k^2}.$$
There are $k$ buckets, so every bucket satisfies the proportional allocation requirements, up to a factor of $2$, with probability at least $1-1/k$.

Now, we discuss how easy is to satisfy the constraint $N_i\geq \frac{16}{\alpha}\log(k)$. The size of the dataset $N$ should be $N\geq \frac{16 k\log k}{\alpha}=\frac{16 k \log k}{m}N$. Equivalently, $16 k \log(k)\leq m$. This inequality is always true for large enough datasets. For example, using $N=4000000$ (which is (on the smaller size) a typical size of the real datasets we tried) with $\alpha=1\%$ we have $16 k \log(k)\leq 40000\Leftrightarrow k\leq 303$, which is always the case in our experiments.
More generally, if we set $k\approx \frac{0.25}{100}m$, which is usually the largest value we use in our experiments, then $\log(k)\leq 25$, which is always true in real scenarios.

\section{Variance Estimator Details}
\label{appndx:varest}
Using algebra we have the following formulas:
For a COUNT/SUM query $q$ (for COUNT we assume that $t.a=1$ for any tuple $t$),
$$w_i\cdot mean(\phi_q(\samples_i))=\frac{1}{m_i}\sum_{t\in \samples_i}\phi(t)=\frac{N_i}{m_i}\sum_{t\in \samples_i\cap q}t.a$$
$$w_i\cdot mean(\phi_q(H_i))=\frac{1}{h_i}\sum_{t\in H_i} \phi(t)=\frac{N_i}{h_i}\sum_{t\in H_i}t.a$$
$$w_i^2\frac{var(\phi_q(\samples_i))}{m_i}=\frac{N_i^2}{m_i^3}\left[m_i\sum_{t\in \samples_i\cap q}t.a^2-\left(\sum_{t\in \samples_i\cap q}t.a\right)^2\right]$$
$$w_i^2\frac{var(\phi_q(H_i))}{h_i}=\frac{N_i^2}{h_i^3}\left[h_i\sum_{t\in H_i}t.a^2-\left(\sum_{t\in H_i}t.a\right)^2\right].$$
For an AVG query $q$ we have
$$w_i\cdot mean(\phi_q(\samples_i))=\frac{N_i}{|\samples_i\cap q|\cdot \sum_{i\in \mathcal{I}_q}N_i}\sum_{t\in \samples_i\cap q}t.a$$
$$w_i\cdot mean(\phi_q(H_i))=\frac{N_i}{h_i\cdot \sum_{i\in \mathcal{I}_q}N_i}\sum_{t\in H_i}t.a,$$
where $\mathcal{I}_q$ is the set of all leaf nodes intersected (either partially or fully) by the query $q$.
$$w_i^2\frac{var(\phi_q(\samples_i))}{m_i}=\frac{w_i^2}{m_i\cdot |\samples_i\cap q|^2}\left[m_i\sum_{t\in \samples_i\cap q}t.a^2-\left(\sum_{t\in \samples_i\cap q}t.a\right)^2\right]$$
$$w_i^2\frac{var(\phi_q(H_i))}{h_i}=\frac{w_i^2}{h_i^3}\left[h_i\sum_{t\in H_i}t.a^2-\left(\sum_{t\in H_i}t.a\right)^2\right]$$

\section{Partition algorithms}
\label{appndx:partAlg}

\subsection{Maximum variance under updates}
\label{sec:updateVar}

An important procedure of all our algorithms is to find what is the maximum error of a query that lies completely in a leaf node or more generally in a rectangle. We explain how to do it for the queries COUNT, SUM, AVG.
In particular, we describe a stronger result: We construct a dynamic data structure over a set of samples $\samples$ with efficient update time such that given a query rectangle, it returns an approximation of the variance of the query with the maximum variance in the query rectangle.
This result leads to very efficient dynamic algorithms for checking the maximum variance and re-constructing a new partition, as we will see in the next subsections.
Let $\maxVar(R)$ be the value of the variance returned by our approximation algorithm in a rectangle $R$.

\mparagraph{COUNT queries}
For COUNT queries it is known~\cite{liang2021combining} that the query with the maximum variance in a rectangle $R$ contains exactly $|R\cap \samples|/2$ samples.
Hence, we construct a dynamic range tree $T$ over $\samples$ with space $O(m\log^d m)$. $T$ can be constructed in $O(m\log^d m)$ time and can be updated in $O(\log^d m)$ time. Given a query rectangle $R$, we run a binary search using $T$ to find two rectangles that contain $|R_i|/2$ items. The query runs in $O(\log^d m)$ time.

\mparagraph{SUM queries}
For SUM queries it is known~\cite{liang2021combining} that we can get a $\frac{1}{4}$-approximation of the maximum variance query inside $R$ with the following simple approach:
Find two non-intersecting rectangles $R_1, R_2$ such that $R_1\cup R_2=R$, and $|R_1\cap \samples|=|R_2\cap \samples|=|R\cap \samples|/2$. Then, they compare $\sum_{t\in R_1\cap \samples}t.a^2$ with $\sum_{t\in R_2\cap \samples}t.a^2$ and return the variance of the rectangle with the largest sum of squares. The variance of this rectangle is a $\frac{1}{4}$-approximation of the maximum variance in $R$.
Range trees work for any aggregation function so we can also use them to compute the sum of the values squared in a rectangle or the variance of a rectangle.
Hence, we can use a dynamic range tree as we had in the COUNT case returning a $\frac{1}{4}$-approximation of the maximum variance.
This data structure has exactly the same complexities as the data structure for COUNT queries.


\mparagraph{AVG queries}
For AVG queries the offline algorithms of~\cite{liang2021combining} cannot be efficiently extended to the dynamic case we are interested in (is not known how to achieve $\polylog m$ update time).
Here we propose a new dynamic data structure for finding an approximation of the maximum variance AVG query in a query rectangle efficiently. The new data structure we propose does not only handle updates efficiently, unlike the data structures in~\cite{liang2021combining}, but it also improves the approximation factor for any dimension $d$.

For a set of samples $\samples$ we construct a dynamic range tree $T'$. We also initialize an empty dynamic data structure $T$ that stores weighted rectangles. Given a query rectangle $R$, it returns the rectangle with the highest weight that lies completely inside $R$. For example $T$ can be a dynamic range tree in $2d$ dimensions storing each rectangle as a $2d$ point by creating a point from its two main opposite corners.
Notice that every $d$-th level node $u$ of $T'$ corresponds to a rectangle $R_u$.
If $|R_u\cap \samples|\leq \delta m$ then we add $R_u$ in $T$ with weight $S(R_u)=\sum_{t\in R_u\cap \samples}t.a^2$. If $\delta m\leq |R_u\cap \samples|\leq 2\delta m$ we split $R_u$ into two rectangles $R_{u_1}, R_{u_2}$ such that $R_{u_1}\cup R_{u_2}=R_u$ and $|R_{u_1}\cap \samples|=|R_{u_2}\cap \samples|=|R_u\cap \samples|/2$. We add $R_{u_1}, R_{u_2}$ in $T$ with weights $S(R_{u_1})=\sum_{t\in R_{u_1}\cap \samples}t.a^2$, $S(R_{u_2})=\sum_{t\in R_{u_2}\cap \samples}t.a^2$.
The tree $T'$ can be constructed in $\tilde{O}(m)$ time and it has $\tilde{O}(m)$ space. In $T$ we might insert $\tilde{O}(m)$ rectangles so it has $\tilde{O}(m)$ space and can be constructed in $\tilde{O}(m)$ time. For any insertion or deletion of a sample in $\samples$, $T'$ can be updated in $\tilde{O}(1)$ time by modifying at most $\tilde{O}(1)$ nodes. For each modified node we update accordingly the corresponding rectangle in $T$ in $\tilde{O}(1)$ time. Furthermore, after an update we traverse the $d$-th level of $T'$ from the updated leaf nodes to their roots inserting or removing rectangles from $T$ accordingly based on the number of points they contain.
Using~\cite{bentley1980decomposable, overmars1981worst, erickson2011static} we can propose a simple dynamic data structure with amortized update time guarantee that can be extended to worst case guarantee by standard techniques~\cite{erickson2011static}.
Overall our data structure can be constructed in $O(m\log^{3d} m)$ time, has $O(\log^{3d} m)$ space and can be updated in $O(\log^{3d+1} m)$ time.


Given a query rectangle $R$ such that $R\cap \samples>2\delta m$ (as we had  in~\cite{liang2021combining}) we show how to return an approximation of the maximum variance query  efficiently.
We search $T$ using the query rectangle $R$ and we get a set of $\tilde{O}(1)$ canonical nodes that contain all rectangles completely inside $R$. From the canonical subsets we get the rectangle $q'$ inside rectangle $R$ with the largest weight.
If $|q'\cap \samples|<\delta m$ then 
using $T'$ we run binary search over all dimensions until we find an expansion of $q'$ that contains exactly $\delta m$ samples. This can be done in $\tilde{O}(1)$ time. Without loss of generality assume that $q'$ contains exactly $\delta m$ samples. Using $T'$ we measure the variance of $q'$ in $R$ in $\tilde{O}(1)$ time. In the end we return the variance of $q'$ as the approximation of the maximum variance AVG query in the query rectangle $R$. The query procedure takes $\tilde{O}(1)$ time.

\begin{lemma}
It holds that $\nu_s(q')\geq \frac{1}{4\log^{d+1} m}\optVar(R)$.
\end{lemma}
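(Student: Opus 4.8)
The plan is to leverage the crucial observation already established in this subsection: if $\hat q$ denotes a rectangle inside $R$ with $|\hat q\cap\samples|=\delta m$ that maximizes the weight $S(\hat q)=\sum_{t\in \hat q\cap\samples}t.a^2$, then $\nu_s(\hat q)\geq \tfrac14\optVar(R)$. Given this, it suffices to show that the rectangle $q'$ returned by the query procedure has variance within a $\log^{d+1}m$ factor of $\nu_s(\hat q)$. I would split the argument into a combinatorial part that lower-bounds the weight $S(q')$, and an analytic part that converts a weight bound into a variance bound.

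For the combinatorial part, I would decompose the axis-aligned box $\hat q$ into its canonical rectangles in $T'$. By the range-tree property there are at most $\log^{d+1}m$ of them, and since each is a subset of $\hat q$ it contains at most $|\hat q\cap\samples|=\delta m$ samples and is therefore one of the weighted rectangles actually stored in $T$. As the weights are additive over this disjoint decomposition, the $S$-values of the pieces sum to $S(\hat q)$, so by pigeonhole some canonical piece $C$ satisfies $S(C)\geq S(\hat q)/\log^{d+1}m$. Because $q'$ is, by construction, the maximum-weight rectangle of $T$ lying inside $R$, and $C\subseteq\hat q\subseteq R$, we obtain $S(q')\geq S(C)\geq S(\hat q)/\log^{d+1}m$. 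The subsequent expansion of $q'$ to exactly $\delta m$ samples stays inside $R$ (as $|R\cap\samples|>2\delta m$) and only adds nonnegative terms $t.a^2$, so the weight lower bound survives the expansion.

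For the analytic part, write $s=\delta m$ and $m_R=|R\cap\samples|$. Plugging $|q'\cap\samples|=s$ into the AVG-variance formula and applying Cauchy--Schwarz in the form $\bigl(\sum_{t\in q'\cap\samples}t.a\bigr)^2\leq s\,S(q')$ gives
\[
\nu_s(q')=\frac{1}{m_R s^2}\Bigl[m_R S(q')-\bigl(\textstyle\sum_{t\in q'\cap\samples}t.a\bigr)^2\Bigr]\geq \frac{(m_R-s)\,S(q')}{m_R s^2}\geq \frac{S(q')}{2s^2},
\]
where the last step uses $s=\delta m<m_R/2$. Symmetrically, dropping the nonpositive subtracted term and invoking the fact from~\cite{liang2021combining} that the maximum-variance AVG query has exactly $\delta m$ samples yields $\optVar(R)\leq S(\hat q)/s^2$. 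Chaining these with $S(q')\geq S(\hat q)/\log^{d+1}m$ already gives $\nu_s(q')\geq \tfrac{1}{2\log^{d+1}m}\optVar(R)$, which is stronger than the claim; alternatively, routing the weight inequality through the key observation reproduces the stated constant $\tfrac{1}{4\log^{d+1}m}$.

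The main obstacle is the combinatorial part: one must verify that \emph{every} canonical rectangle in the decomposition of $\hat q$ is genuinely present in $T$ with the correct weight. This rests on two facts that need care, namely that the decomposition uses only $O(\log^{d+1}m)$ canonical nodes (so the pigeonhole loss is controlled) and that each such node carries at most $\delta m$ samples, so it was inserted directly into $T$ rather than being split during construction. Once these are pinned down, the Cauchy--Schwarz conversion and the appeal to the key observation are routine.
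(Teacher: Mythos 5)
Your argument is sound and follows the same skeleton as the paper's proof---a canonical decomposition in $T'$, an averaging/pigeonhole step to show the maximum-weight rectangle $q'$ captures a $1/\log^{d+1}m$ fraction of the relevant sum of squares, and a conversion from a weight bound to a variance bound---but it differs in two ways worth noting. First, you decompose $\hat q$, the maximum-weight rectangle with exactly $\delta m$ samples, whereas the paper decomposes $q$, the maximum-variance query itself, which contains between $\delta m$ and $2\delta m$ samples; because the canonical pieces of $q$ can hold up to $2\delta m$ samples, the paper must additionally argue that each canonical node of $T'$ corresponds to at most two stored rectangles in $T$ (the split halves), while your choice of $\hat q$ guarantees every canonical piece has at most $\delta m$ samples and is stored directly, cleanly sidestepping that bookkeeping. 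Second, you rederive the variance lower bound $\nu_s(q')\geq S(q')/(2s^2)$ from Cauchy--Schwarz rather than citing Lemma~A.2 of~\cite{liang2021combining}; this is the same inequality, just made self-contained. One small correction: the maximum-variance AVG query is only known to contain \emph{between} $\delta m$ and $2\delta m$ samples, not exactly $\delta m$. To get $\optVar(R)\leq O(1)\cdot S(\hat q)/s^2$ you therefore need the extra halving step (split $q$ into two halves of at most $\delta m$ samples each, take the one with the larger sum of squares, and expand it to $\delta m$ samples), which costs a factor of $2$. With that correction your chain yields exactly the stated $\frac{1}{4\log^{d+1}m}$ rather than the $\frac{1}{2\log^{d+1}m}$ you claim is ``stronger than needed''; the lemma itself is unaffected.
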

\begin{proof}
Let $q$ be the AVG query (a rectangle) with the maximum variance in $R$. It is known from~\cite{liang2021combining} that $q$ contains at most $2\delta m$ and at least $\delta m$ samples.
Since $|R\cap \samples|\geq 2\delta m$ we have that $$|R\cap \samples|\sum_{t\in q'\cap \samples}t.a^2-\left(\sum_{t\in q'\cap \samples}t.a\right)^2\geq \frac{|R\cap \samples|}{2}\sum_{t\in q'\cap \samples}t.a^2,$$
following from Lemma A.2 in~\cite{liang2021combining}.
Next, notice that a query procedure on $T'$ with the query range $q$ would give a set of $\log^d m$ canonical rectangles that cover $q$ where each of them contains at most $2\delta m$ samples. We note that for each of these canonical rectangles in $T'$ there are at most two rectangles in $T$ containing the same items. Let $X$ be the set of rectangles in $T$ corresponding to the canonical rectangles in $T'$. From its definition it holds that $|X|\leq 2\log^{d+1} m$. All of these rectangles in $X$ lie completely inside $R$ so the query procedure will consider them to find the rectangle with the largest weight.
Hence it holds that 
$$\sum_{t\in q'\cap \samples}t.a^2\geq \max_{x\in X}\sum_{t\in x\cap \samples} t.a^2\geq \frac{1}{2\log^{d+1} m}\sum_{t\in q\cap \samples}t.a^2.$$ Overall we have, (for simplicity $|R|=|R\cap S|$)
\begin{align*}
    \nu_s(q')&=\frac{1}{|R|\cdot|q'\cap \samples|^2}\left[|R\cap \samples|\sum_{t\in q'\cap \samples}t.a^2-\left(\sum_{t\in q'\cap \samples}t.a\right)^2\right]\\
    &\geq \frac{1}{|R|\cdot|q'\cap \samples|^2}\frac{|R\cap \samples|}{2}\sum_{t\in q'\cap \samples}t.a^2\\
    &\geq \frac{1}{|R|\cdot|q'\cap \samples|^2}\frac{1}{4\log^{d+1} m}|R\cap \samples|\sum_{t\in q\cap \samples}t.a^2\\
    &\!\geq\!\frac{|q\cap \samples|^2}{|q'\cap \samples|^2}\frac{1}{4\log^{d+1} m}\frac{1}{|R|\!\cdot\!|q\cap \samples|^2}\left[|R\cap \samples|\!\!\!\sum_{t\in q\cap \samples}\!\!t.a^2\!-\!\left(\!\sum_{t\in q\cap \samples}\!\!t.a\!\right)^2\!\right]\\
    &\geq \frac{1}{4\log^d m}\optVar(R)
\end{align*}
\end{proof}

If $d=1$ we can modify the data structure so that it gives an approximation factor $4$.

Overall, given a set $\samples$ of $m$ points in $d$ dimensions we can construct a data structure of space $O(m\cdot \polylog (m))$ in $O(m\cdot \polylog (m))$ time with update time $O(\polylog (m))$ such that given a query rectangle it finds an approximation of the maximum variance COUNT, SUM or AVG query inside $R$ in $O(\polylog (m))$ time.

\subsection{Partition for $d=1$}
\label{appndx:1dpartition}

For COUNT queries the optimum partition in $1$D consists of equal size buckets (intervals) so we can find the new partition in $O(k\log m)$ time by maintaining the order of the samples $\samples$ under insertion or deletion using a balanced search binary tree where the samples are stored in the leaf nodes. Such a tree can be updated in $O(\log m)$ time while the order of the samples on the real line is the same as the order of the leaf nodes from left to right. When we have to (re-)construct the partition we find the right endpoint of each bucket using the search binary tree. Overall, we need $O(\log m)$ time to update the tree and $O(k\log m)$ time to construct a new partition.

Next, we focus on SUM and AVG queries.

\mparagraph{Bounding the error}
We first show a lemma that bounds the maximum length of the largest possible confidence interval among queries that intersect one bucket of the partition.
We assume that the value of any item in $\fullData$ is bounded by a maximum value $\mathcal{U}$ and a minimum non-zero value $\mathcal{L}$. We allow items to take zero values since this is often the case in real datasets but no item with positive value less than $\mathcal{L}$ or larger than $\mathcal{U}$ exists. We assume that $\mathcal{U}=O(\poly(N))$ and $\mathcal{L}=\Omega(1/\poly(N))$.

\begin{lemma}
\label{lem:bound}
Let $R$ be any rectangle and
let $\optVar_S(R), \optVar_A(R)>0$ be the variance of the SUM and AVG query respectively with the maximum variance in $R$. Then it holds that $\frac{\mathcal{L}}{\sqrt{2}}\leq \sqrt{\optVar_S(R)}\leq N\mathcal{U}$ and $\frac{\mathcal{L}}{\sqrt{2}N}\leq \sqrt{\optVar_A(R)}\leq \sqrt{N}\mathcal{U}$.
\end{lemma}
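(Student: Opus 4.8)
The plan is to prove the four inequalities by substituting into the closed-form variance expressions displayed just before the lemma and combining them with the value bounds together with one elementary inequality borrowed from~\cite{liang2021combining}. Write $m_R=|R\cap\samples|$ and $N_R=|R\cap\fullData|$; since $\samples\subseteq\fullData$ we have $1\le m_R\le N_R\le N$, and every aggregation value is either $0$ or lies in $[\mathcal{L},\mathcal{U}]$. The one external tool I would invoke is the Cauchy--Schwarz-type inequality (Lemma~A.2 of~\cite{liang2021combining}): if a query $q'$ contains at most half the samples of $R$, i.e. $|q'\cap\samples|\le m_R/2$, then $m_R\sum_{t\in q'\cap\samples}t.a^2-\bigl(\sum_{t\in q'\cap\samples}t.a\bigr)^2\ge\frac{m_R}{2}\sum_{t\in q'\cap\samples}t.a^2$.

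The two upper bounds are routine: I would discard the non-positive term $-(\sum t.a)^2$ and bound each squared value by $\mathcal{U}^2$. For the maximum-variance SUM query $q$ this gives $\nu_s(q)\le\frac{N_R^2}{m_R^3}\,m_R\sum_{t\in q\cap\samples}t.a^2\le\frac{N_R^2}{m_R}\mathcal{U}^2\le N^2\mathcal{U}^2$, hence $\sqrt{\optVar_S(R)}\le N\mathcal{U}$. For the AVG query the same manipulation yields at most $\frac{1}{|q\cap\samples|^2}\sum_{t\in q\cap\samples}t.a^2\le\frac{m_R}{|q\cap\samples|^2}\mathcal{U}^2\le N\mathcal{U}^2$, hence $\sqrt{\optVar_A(R)}\le\sqrt{N}\mathcal{U}$.

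The lower bounds are where I would concentrate, and the idea is to exhibit a single explicit witness query $q'$ whose variance is already large. Because $\optVar_S(R)>0$ (resp.\ $\optVar_A(R)>0$), the samples in $R$ cannot all be equal, so $R$ contains at least one sample of value $\ge\mathcal{L}$. For SUM I would take $q'$ to be the sub-interval of $R$ holding exactly $m_R/2$ of its samples that contains such a nonzero sample (so $\sum_{t\in q'\cap\samples}t.a^2\ge\mathcal{L}^2$); applying the inequality above, $\optVar_S(R)\ge\nu_s(q')\ge\frac{N_R^2}{m_R^3}\cdot\frac{m_R}{2}\mathcal{L}^2=\frac{N_R^2}{2m_R^2}\mathcal{L}^2\ge\frac{\mathcal{L}^2}{2}$ using $N_R\ge m_R$, giving $\sqrt{\optVar_S(R)}\ge\frac{\mathcal{L}}{\sqrt2}$. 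For AVG I would instead take $q'$ to be a valid query with $|q'\cap\samples|=\delta m$ samples (which satisfies $\delta m\le m_R/2$ by the validity assumption on $R$) again containing a nonzero sample; then $\optVar_A(R)\ge\frac{1}{m_R(\delta m)^2}\cdot\frac{m_R}{2}\mathcal{L}^2=\frac{\mathcal{L}^2}{2(\delta m)^2}\ge\frac{\mathcal{L}^2}{2N^2}$ using $\delta m\le N$, giving $\sqrt{\optVar_A(R)}\ge\frac{\mathcal{L}}{\sqrt2\,N}$.

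The main obstacle is the lower-bound direction, specifically the bookkeeping of the sample-count factors so that the $m_R^3$ and $(\delta m)^2$ denominators cancel correctly, and the justification that the witness $q'$ genuinely carries a nonzero sample. The latter is exactly where the hypothesis $\optVar>0$ is used: it rules out the degenerate all-equal configuration and lets me select a half (or a $\delta m$-sample sub-rectangle) that contains a value $\ge\mathcal{L}$, which is what converts the raw inequality into the clean $\mathcal{L}^2/2$ and $\mathcal{L}^2/(2N^2)$ bounds. The upper bounds, by contrast, require only crude term-dropping and the value ceiling $\mathcal{U}$.
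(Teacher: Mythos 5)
Your proposal is correct and follows essentially the same route as the paper's proof: both obtain the lower bounds by invoking the half-sample (resp.\ $\delta m$-sample) witness query from Lemma~A.2 of~\cite{liang2021combining}, lower-bound its sum of squares by $\mathcal{L}^2$, and cancel the sample-count factors using $|R\cap\samples|\leq N_R\leq N$, while the upper bounds follow in both cases by dropping the negative term and bounding each squared value by $\mathcal{U}^2$. The only cosmetic difference is that you spell out why the witness carries a nonzero sample, whereas the paper imports that fact directly from the cited lemma.
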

\begin{proof}
Without loss of generality let $q$ be the SUM or AVG query with the maximum variance in $R$.

First we focus on SUM queries. Let $N_R=|R\cap \fullData|$ be the number of total tuples in $R$. Unless $\nu_s(q)=0$, from~\cite{liang2021combining}, we know that there exists a query $q'$ with $|q'|=|R|/2$ such that $\nu_s(q)\geq \frac{N_R^2}{|R|^3}\frac{|R|}{2}\sum_{t\in q'}t.a^2$, where $\sum_{t\in q'}t.a^2>0$. We also have $\sum_{t\in q'}t.a^2\geq \mathcal{L}^2$ (and $w_u=1$) leading to $\sqrt{\optVar_S(R)}\geq\frac{N_R}{\sqrt{2}|R|}\mathcal{L}\geq \frac{\mathcal{L}}{\sqrt{2}}$. Furthermore, we have $\nu_s(q)\leq\frac{N_R^2}{|R|^2}|R|^2\mathcal{U}^2\leq N^2\mathcal{U}^2$ leading to $\sqrt{\optVar_S(R)}\leq N\mathcal{U}$.

Next, we consider AVG queries. Unless $\nu_s(q)=0$, from~\cite{liang2021combining}, we know that there exists a query $q'$ with $|q'|=\delta m\leq |R|/2$ such that $\nu_s(q')\geq \frac{1}{|R|\delta^2m^2}\frac{|R|}{2}\sum_{t\in q'}t.a^2$, where $\sum_{t\in q'}t.a^2>0$. We also have $\sum_{t\in q'}t.a^2\geq \mathcal{L}^2$ (and $w_u=1$) leading to
$\sqrt{\optVar_A(R)}\geq \frac{1}{\sqrt{2}\delta m}\mathcal{L} \geq \frac{\mathcal{L}}{\sqrt{2}N}$.
Furthermore, we have $\nu_s(q)\leq\frac{|R|}{|q|^2}\mathcal{U}^2\leq |R| \mathcal{U}^2\leq N \mathcal{H}^2$ leading to $\sqrt{\optVar_A(R)}\leq \sqrt{N}\mathcal{U}$.
\end{proof}

Since, $\mathcal{U}, \mathcal{L}$ are bounded by a polynomial with respect to $N$, we have that the length of the longest confidence interval is bounded by $O(\poly(N))$ and $\Omega(1/\poly(N))$, i.e. $$\Omega(1/\poly(N))\leq \sqrt{\optVar_S(R)}, \sqrt{\optVar_A(R)} \leq O(\poly(N)).$$

\mparagraph{Description of algorithm}
We describe the partition algorithm for SUM queries. The procedure is identical for AVG queries and we highlight the differences in the end of this section.
For a parameter $\rho\in \mathbb{R}$ with $\rho>1$, let $E=\{\rho^t\mid t\in  \mathbb{Z}, \frac{\mathcal{L}}{\sqrt{2}}\leq \rho^t\leq N\mathcal{U}\}\cup\{0\}$, be the discretization of the range $[\frac{\mathcal{L}}{\sqrt{2}}, N\mathcal{U}]$, i.e., the lower and upper bound of the longest confidence interval (assuming queries completely inside one bucket), by the multiplicative parameter $\rho$.
For an interval $b$, let $\MAXV(b)$ be the approximation of the query with the maximum variance in bucket $b$ (supporting updates) as described in Section~\ref{sec:updateVar}.
We run a binary search on the values of $E$. For each value $e\in E$ we consider, we try to construct a partition of $k$ buckets such that in each bucket the length of the longest confidence interval is at most $e$. If there exists such a partition we continue the binary search with values $e'<e$. If there is no such a partition we continue the binary search with values $e'>e$. In the end of the binary search we return the last partition that we were able to compute.

It remains to describe how to check if a partition with $k$ buckets (intervals) with maximum length confidence interval at most $e$ exists. We start with the leftmost sample, say $t_1$, which is the left boundary of the first bucket. In order to find its right boundary we run a binary search on the samples $\samples$. Let $t_j$ be one of the right boundaries we check in the binary search, and let $b_1=[t_1, t_j]$. If $\sqrt{\MAXV(b_1)}<e$ then we continue the binary search with a sample at the right side of $t_j$ (larger bucket). Otherwise, we continue the binary search with a sample at the left side of $t_j$ (smaller bucket). When we find the maximal bucket with longest confidence interval at most $e$ we continue with the second bucket repeating the same process for at most $k$ buckets. In the end, if all samples in $\samples$ are contained in $k$ buckets then we return that there exists a partition (with $k$ buckets) with maximum variance at most $e$. If we cannot cover all samples in $k$ buckets then we return that there is no partition (with $k$ buckets) with maximum variance at most $e$.

The same algorithm also works for AVG queries. The only difference is that $E$ should be defined with respect to the upper and lower bound of the longest confidence interval, as shown in Lemma~\ref{lem:bound}.

\mparagraph{Correctness}
Before we start with the correctness proof of our algorithm we recall that in~\cite{liang2021combining} we showed that under a mild assumption, for two buckets $b_i, b_j$ if $b_i\subseteq b_j$ then $\sqrt{\optVar(b_i)}\leq \sqrt{\optVar(b_j)}$, namely the length of the longest confidence interval in $b_i$ is smaller than the length of the longest confidence interval in $b_j$. This is the monotonic property of the longest confidence interval.

We assume that $\MAXV(b_i)$ computes a $\frac{1}{\gamma}$-approximation of the maximum variance in $b_i$, i.e., $\MAXV(b_i)\geq \frac{1}{\gamma}\optVar(b_i)$.
Let $\mathcal{R}^*$ be the optimum partition 
and let $b^*$ be the bucket that contains the query with the longest confidence interval in $\mathcal{R}^*$.
First, we notice that if $e\geq \sqrt{\optVar(b^*)}$ then we always find a partition with longest confidence interval at most $e$.
We can show it by induction on the right boundaries of the buckets (intervals) and the monotonic property of confidence intervals.
For the base case,  let $b_1^*=[t_1, t_2]$ be the first bucket of partition $\mathcal{R}^*$. The procedure $\MAXV$ always underestimates the maximum variance in an interval so the binary search in our procedure will consider the right boundary to be greater than $t_2$.
Let $t_i$ be the right boundary of the $i$-th bucket in $\mathcal{R}^*$ and let assume that the $i$-th bucket in our procedure has a right boundary $t_j\geq t_i$.
We consider the $(i+1)$-th bucket in $\mathcal{R}^*$ with boundaries $[t_{i+1}, t_r]$. We show that the $(i+1)$-th bucket in our procedure has a right boundary at least $t_r$. Let $[t_a, t_b]$ be the boundaries of the $(i+1)$-th bucket in our procedure. We have $t_a\geq t_{i+1}$. If $t_a=t_{i+1}$ then $t_b\geq t_r$ as in the basis case. If $t_a>t_{i+1}$ then because of the monotonic property of the confidence intervals and the fact that the $\MAXV$ procedure underestimates the maximum variance we also have that $t_b\geq t_r$.
Let $e'$ be the smallest value in $E$ such that $\sqrt{\optVar(b^*)}\leq e'$. Because of the previous observation our algorithm always returns at least a valid partition for an $e\leq e'$. For every bucket $b$ of this partition, $\sqrt{\MAXV(b)}\leq e$. Let $b'$ be the bucket in the returned partition containing the query with the longest confidence interval in the partition. We have, $\sqrt{\optVar(b')}\leq \sqrt{\gamma \MAXV(b')}\leq \sqrt{\gamma} e\leq \sqrt{\gamma} e'\leq \rho\sqrt{\gamma} \sqrt{\optVar(b^*)}$.

From Section~\ref{sec:updateVar} we have that $\gamma=4$ for SUM and AVG queries queries. So we get a partition where the maximum error is within $2\rho\sqrt{2}$ of the optimum error for SUM queries and within $2\rho$ of the optimum error for AVG queries.

\mparagraph{Running time}
We assume that $\MAXV(\cdot)$ can be computed in $M$ time. Since, $\mathcal{L}, \mathcal{U}$ are polynomially bounded on $N$ we have that $|E|=O(\log_\rho N)$ and it can be constructed in $O(\log_\rho N)$ time. The binary search over $E$ takes at most $O(\log \log_\rho N)$ steps. For each value $e\in E$ of the binary search we check if there is a partition with $k$ buckets and longest confidence interval at most $e$. For each possible bucket we run a binary search over the samples $\samples$ and we run the procedure $\MAXV$ to get an approximation of the maximum variance. Hence, we can decide if there exists a partition with confidence interval $e$ in $O(kM\log m)$ time. Overall, our algorithm takes $O(kM\log m \log\log_\rho N)$. If $\rho$ is a constant, for example $\rho=2$ then the running time is $O(kM\log m \log\log N)$. From Section~\ref{sec:updateVar} we have that in $1$-dimension $M=O(\log m)$ for SUM and $M=O(\log^2 m)$ for AVG queries. Notice that if we skip the $\log$ factors the running time depends only linearly on the number of buckets $k$.

\subsection{Partition for higher dimensions}
We construct a partition by building a k-d tree using the dynamic procedures $\MAXV$ as shown in Section~\ref{sec:updateVar}.
Using the results of~\cite{liang2021combining} we could construct a near optimum k-d tree in time $O(km)$ skipping the $\log$ factors. Here, we use our new results from Section~\ref{sec:updateVar} to construct a k-d tree faster (roughly $O(k)$) with better approximation approximation guarantees.

We start by constructing a dynamic data structure from Section~\ref{sec:updateVar} over the initial set of samples $\samples$.
Assume that after a number of insertions and deletions in $\samples$ we want to (re)construct the tree structure $\PASStree$ over $\samples$.
We construct a partition tree on $\samples$ using ideas from the balanced k-d tree construction.
We pre-define an ordering of the dimensions.
Each node $u$ of the tree is associated with a rectangle $R_u$.
We build the tree in $k$ iterations in a top-down manner starting from the root $v$ such that $\fullData\subset R_v$. 
In any iteration we store and maintain the approximate maximum variance queries of every leaf node in a max heap $C$. In the end of the $i$-th iteration we have a tree of $i$ leaf nodes. Let $u$ be the leaf node with the maximum $\maxVar(R_u)$ value in $C$. We remove its value from $C$. We find the medium coordinate with respect to the next dimension in the ordering (in this branch of the tree) among the samples $R_u\cap \samples$. We split $R_u$ on the median into two rectangles $R_{u_1}, R_{u_2}$ such that $R_{u_1}\cup R_{u_2}=R_u$ and we construct the children $u_1, u_2$ of the parent node $u$. Using the algorithms from Section~\ref{sec:updateVar} we compute $\maxVar(R_{u_1}), \maxVar(R_{u_2})$ and we insert their values in the max-heap $C$. We continue with the same way until we construct a tree $\PASStree$ with $k$ leaf nodes (buckets).

As we showed in~\cite{liang2021combining} such a tree construction returns a partition which is near optimal with respect to the optimum partition tree construction following the same splitting criterion: split on the median of the leaf node with the largest (real) maximum variance query. In our case we do not always split the nod with the real largest error since we use the approximation function $\maxVar(\cdot)$.

For any query our data structure from Section~\ref{sec:updateVar} can be updated in $\tilde{O}(1)$ time.
Given a (re-)partition activation query over a set $\samples$ of $m$ samples we can construct a new $\PASStree$ with the following guarantees: For SUM queries, $\PASStree$ can be constructed in $O(k\log^d m)$ time with approximation factor $2\sqrt{k}$.
For COUNT queries we get the same construction time $O(k\log^d m)$ but the tree we construct is optimum (with respect to the partition tree with same split criterion).
For AVG queries, $\PASStree$ can be constructed in $O(k\log^{2d} m)$ time with approximation factor $2\log^{d/2} m$.
In all cases we can construct near-optimum partitions in $\tilde{O}(k)$ time.

\section{Re-Partitioning Triggers}
\label{appx:re-partition}
A key contribution of \sys is continuous re-optimization of  the partitioning.
We describe how \sys tracks the variances of the current partitions and decides when to re-partition. We also propose two ways to re-partition: partial or full re-partitioning.

Assume that the current partitioning is $\mathcal{R}$ and let $\maxVar(\mathcal{R})$ be the (approximate) maximum variance query with respect to the current set of samples $\samples$.
The automatic procedure first checks the number of samples in each bucket (leaf node) of the current $\PASStree$. If there is a leaf node $i$ associated with partition $R_i$ such that $|\samples_i|<<\frac{1}{\alpha}\log m$ (recall that $\alpha$ is the sampling rate) then there are not enough samples in $u$ to make robust estimators. Hence, we need to find a new re-partition of $\samples$. Even if the number of samples in each bucket is large our system might enable a re-partition:
For a partition $R_i$ in the leaf node layer of $\PASStree$ let $\maxVar_i=\maxVar(R_i)$ be the (approximate) maximum variance at the moment we constructed $\PASStree$. Let $\beta>1$ be a parameter that controls the maximum allowable change on the variance. It can either be decided by the user or we can set it to $\beta=10$. Assume that an update occurred in the leaf node associated with the partition $R_i$. After the update we run the function $\maxVar_i'=\MAXVar(R_i)$ and we update $\maxVar(\mathcal{R})$ if needed. If $\frac{1}{\beta}\maxVar_i\leq \MAXVar_i'\leq \beta\MAXVar_i$ then the new maximum variance in partition $b_i$ is not very different than before so we do not trigger a re-partition. Otherwise, the maximum variance in bucket $b_i$ changed by a factor larger than $\beta$ from the initial variance $\maxVar_i$.
In this case a re-partition might find a new tree with smaller maximum error.
We compute a new partitioning $\mathcal{R}'$ and hence a new tree $\PASStree$.
If $\maxVar(\mathcal{R}')<\frac{1}{\beta}\maxVar(\mathcal{R})$ then we activate a re-partition restarting the catch-up phase over the new tree $\PASStree$. On the other hand, if $\maxVar(\mathcal{R}')\geq \frac{1}{\beta}\maxVar(\mathcal{R})$ then our current partitioning $\mathcal{R}$ is good enough (its worst error is close to the optimum one) so we can still use it.
Of course, the user can also manually trigger re-partitioning. For example, the user can choose to re-partition once every hour, day, or after $\tau$ insertions and deletions have occurred.

Next, we propose two ways to re-partition the index. In particular, the user can select either partial re-partitioning or full re-partitioning.
full re-partitioning is easy; using the algorithms from the previous subsections we can construct a new partitioning and a new tree structure in near-linear time with respect to the samples. Hence, we focus on partial re-partitioning. Instead of re-partitioning the entire space we can only re-partition the area around the "problematic" leaf node. Let $b_i$ be this leaf node with high error or small number of samples. In order to define the neighboring area around $b_i$ we propose either a predefined way or an automatic way. In both cases, the neighboring area is defined by a parameter $\psi$, which is the level of the tree above $b_i$ that the tree needs to be updated.
In the predefined way, the parameter $\psi$ is a known parameter. We find the node $v$ which is defined as an ancestor of the leaf node $b_i$, $\psi$ levels above $b_i$. Let $\PASStree_u$ is the subtree with root node $u$ and let $l_u$ be the number of leaf nodes in $\PASStree_u$. Using the algorithms from the previous subsections we find a near optimum partition starting from node $u$ with $l_u$ leaf nodes. The running time is near-linear with respect to the samples stored $\PASStree_u$. In the automatic way, we do not know the parameter $\psi$ upfront so we try different values of $\psi$ running a binary search on the levels of the tree until we find a partition with low enough error. For each different value of $\psi$ we try, we run the same partial re-partitioning algorithm as in the static case starting from the node $u$ we are considering in the binary search.

Generally, partial re-partitioning is faster than the full re-partitioning since it only suffices to find a better partitioning in a small area of the space. Furthermore, in partial re-partitioning we can still keep all the current estimations in all nodes of $\PASStree\setminus \PASStree_u$, i.e., the nodes of the tree that are not changed. Hence, the error of queries after a partial re-partitioning is also lower than the error of the queries immediately after a full re-partitioning. However, in both cases we need to restart the catch-up phase over the new tree in order to get good estimators to the nodes that were changed by the partial re-partitioning.
Recall that we cannot get samples from a particular area (ideally samples that stored in the leaf nodes of $\PASStree_u$) hence we run the catch-up phase getting samples from the entire space. We finally note that while the catch-up phase considers samples from the entire space, we only use these samples to improve the estimators in the nodes that are still under-represented, i.e., the catch-up phase time threshold for these nodes has not been completed.

\end{document}